\newif\ifsubmission
\newif\iffull\fulltrue
\newif\ifndss\ndsstrue
\ifsubmission
\documentclass[sigconf, anonymous]{acmart}
\else
\documentclass[sigconf]{acmart}
\fi

\fancyhf{} 
\fancyfoot[C]{\thepage}

\setcopyright{none} 
\acmConference[]{}{}{}

\settopmatter{printacmref=false, printccs=false, printfolios=false}
\usepackage{framed}
\usepackage{minibox}
\usepackage{amsmath,amssymb,amsthm,amsfonts}
\usepackage{graphicx}
\usepackage{xcolor}
\usepackage[T1]{fontenc}
\usepackage[ruled,algosection,noend,noline,linesnumbered]{algorithm2e}
\usepackage{multirow,array}
\usepackage{subcaption}
\usepackage[font=small]{caption}
\usepackage{enumitem}
\usepackage[all]{xy}
\usepackage{booktabs}
\usepackage{multirow}
\usepackage{url}
\usepackage{xspace}
\usepackage[group-separator={,}]{siunitx}
\usepackage{epigraph}
\usepackage{balance}

\newtheorem{theorem}{Theorem}[section]
\newtheorem{claim}[theorem]{Claim}

\newenvironment{denseitemize}{
	\begin{itemize}
		\setlength{\itemsep}{1pt}
		\setlength{\parskip}{0pt}
		\setlength{\parsep}{0pt}
	}{\end{itemize}}

\newcommand\bi{\begin{denseitemize}}
	\newcommand\ei{\end{denseitemize}}
\newcommand\ben{\begin{enumerate}}
	\newcommand\een{\end{enumerate}}
\newcommand{\trm}[1]{\textrm{#1}}


\newcommand*\dash{\ifvmode\quitvmode\else\unskip\kern.16667em\fi---%
	\hskip.16667em\relax}
\newcommand{\sysname}{Caucus\xspace}
\newcommand{\betting}{Fant\^omette\xspace}
\newcommand\sarahm[1]{}
\newcommand\saraha[1]{}
\newcommand\paddy[1]{}


\newcommand{\pun}{\mathsf{pun}}
\newcommand{\bigpun}{\mathsf{bigpun}}
\newcommand{\rwd}{\mathsf{rwd}}

\newcommand{\rank}{\mathsf{rk}}
\newcommand{\eligible}{\mathsf{Eligible}}

\newcommand{\tree}{\mathsf{G}}
\newcommand{\blockdag}{\mathsf{G}}
\newcommand{\trees}{\mathcal{G}}

\newcommand{\directfuture}{\mathsf{DirectFuture}}
\newcommand{\past}{\mathsf{Past}}

\newcommand{\coalition}{\mathcal{C}}

\newcommand{\fcr}{\mathsf{FCR}}
\newcommand{\winner}{\trm{winner}}
\newcommand{\neutral}{\trm{neutral}}
\newcommand{\loser}{\trm{loser}}
\newcommand{\lab}{\mathsf{Label}}
\newcommand{\maps}{\mathcal{M}}
\newcommand{\map}{\mathsf{M}}
\newcommand\bcpc{\mathsf{BCPD}}

\newcommand\treepub{\tree_{\mathsf{pub}}}

\newcommand\precede\preccurlyeq
\newcommand\maxweight{\mathsf{CW}}
\newcommand{\byzfrac}{f_B}
\newcommand{\ratfrac}{f_R}
\newcommand{\ncoal}{n_\coalition}
\newcommand\altfrac{f_A}
\newcommand\coalfrac{f_\coalition}
\newcommand{\leaves}{\blocks_\leaf}
\newcommand{\dagleaves}{\mathsf{Leaves}}
\newcommand{\leaf}{\mathsf{leaf}}
\newcommand{\blue}{\mathsf{blue}}

\newcommand{\ancestors}{\mathsf{Ancestors}}
\newcommand{\bone}{\block_1}
\newcommand{\btwo}{\block_2}
\newcommand{\gen}{\mathsf{Gen}}
\newcommand{\provepef}{\mathsf{Prove}}
\newcommand\eligibilitytest{VRF }
\newcommand\double{\mathsf{Double}}
\newcommand{\pef}{\trm{G}}


\newcommand{\pr}{\textrm{Pr}}

\newcommand{\A}{\mathcal{A}}
\newcommand{\N}{\mathbb{N}}

\newcommand{\secp}{\lambda}
\newcommand{\usecp}{1^{\secp}}

\newcommand{\stateinfo}{\mathsf{state}}

\newcommand{\randpick}{%
  \mathrel{\vbox{\offinterlineskip\ialign{%
    \hfil##\hfil\cr
    $\scriptscriptstyle\$$\cr
    \noalign{\kern-.1ex}
    $\leftarrow$\cr
}}}}

\newcommand{\bet}{\mathsf{Bet}}

\newcommand{\inputs}{\mathsf{inputs}}
\newcommand{\outputs}{\mathsf{outputs}}

\newcommand{\pk}{\ensuremath{pk}}
\newcommand{\sk}{{\ensuremath{sk}}}

\newcommand{\blocks}{\mathcal{B}}

\newcommand{\block}{\mathsf{B}}
\newcommand{\chain}{\mathsf{C}}

\newcommand{\round}{\mathsf{rnd}}

\newcommand{\blockprev}{\block_{\mathsf{prev}}}
\newcommand{\anticone}{\mathsf{Anticone}}
\newcommand{\sender}{\mathsf{snd}}

\newcommand{\tx}{\mathsf{tx}}

\newcommand{\verifyblock}{\mathsf{VerifyBlock}}

\newcommand{\setup}{\mathsf{Setup}}
\newcommand{\verify}{\mathsf{Verify}}

\newcommand{\coord}{\mathtt{Update}}

\newcommand{\privatestate}{\stateinfo_{\mathsf{priv}}}

\newcommand{\pubstate}{\stateinfo_{\mathsf{pub}}}

\newcommand{\target}{\mathsf{target}}
\newcommand{\txset}{\mathsf{txset}}

\newcommand{\participants}{\mathcal{P}}

\newcommand{\afrac}{f_\A}

\newcommand{\globalrandao}{R}

\newcommand{\reveal}{\mathsf{Reveal}}
\newcommand{\commit}{\mathsf{Commit}}
\newcommand{\barrier}{\mathsf{barrier}}

\newcommand{\participantsnumber}{n}

\newcommand{\hashmax}{H_{\mathsf{max}}}
\newcommand{\txdep}{\tx_{\mathsf{com}}}
\newcommand{\txrev}{\tx_{\mathsf{rev}}}

\newcommand{\commitlist}{\mathtt{c}}

\newcommand{\deposit}{\mathsf{Commit}}

\begin{document}
\title{Betting on Blockchain Consensus with \betting}

\begin{abstract}
Blockchain-based consensus protocols present the opportunity to develop
new protocols, due to their novel requirements of open participation and
explicit incentivization of participants.  To address the first requirement,
it is necessary to consider the \emph{leader election} inherent in consensus
protocols, which can be difficult to scale to a large and untrusted set of
participants.  To address the second, it is important to consider ways to
provide incentivization without relying on the resource-intensive
proofs-of-work used in Bitcoin. In this paper, we propose a secure leader election
protocol, \sysname; we next fit this
protocol into a broader blockchain-based consensus protocol, \betting, that
provides game-theoretic guarantees in addition to traditional
blockchain security properties.
\betting is the first proof-of-stake protocol
to give formal game-theoretic proofs of security in the presence
of non-rational players.

\end{abstract}

\begin{CCSXML}
	<ccs2012>
	<concept>
	<concept_id>10002978.10002979</concept_id>
	<concept_desc>Security and privacy~Cryptography</concept_desc>
	<concept_significance>300</concept_significance>
	</concept>
	</ccs2012>
\end{CCSXML}



\iffull{
\author{Sarah Azouvi, Patrick McCorry, and Sarah Meiklejohn}
\affiliation{University College London}
\email{{sarah.azouvi.13,p.mccorry,s.meiklejohn}@ucl.ac.uk}
\fi

\maketitle


\section{Introduction}

One of the central components of any distributed system is a \emph{consensus
protocol}, by which the system's participants can agree on its current
state and use that information to take various actions.  Consensus protocols
have been studied for decades in the distributed systems literature, and
classical protocols such as Paxos~\cite{paxos-made-simple} and PBFT~\cite{pbft} have 
emerged as ``gold standards'' of sorts, in terms of their ability to guarantee
the crucial properties of safety and liveness even in the face of faulty or
malicious nodes.\\
\indent The setting of \emph{blockchains} has renewed interest in consensus protocols, 
due largely to two crucial new requirements:
scalability and incentivization.  First, classical consensus
protocols were designed for a closed and relatively small set of participants,
whereas in open (or ``permissionless'') blockchains the goal is to enable
anyone to join.  This requires the design of new consensus protocols that can
both scale to handle a far greater number of participants, and also ones that
can address the question of Sybil attacks~\cite{sybil}, due to the fact that
participants may no longer be well identified.\\
\indent
At heart, one of the biggest obstacles in scaling classical consensus
protocols is in scaling their underlying \emph{leader election} protocol, 
in which one participant or subset of participants is chosen to lead the
decisions around what information should get added to the ledger for a single 
round (or set period of time).  To elect a leader, participants must coordinate 
amongst themselves by exchanging several rounds of messages.  If a Sybil is
elected leader, this can result in the adversary gaining control over the
ledger, and if there are too many participants then the exchange of messages
needed to carry out the election may become prohibitively expensive.\\
\indent
Many recent proposals for blockchain-based consensus protocols focus on 
solving this first requirement by presenting more scalable leader election
protocols~\cite{praos,algorand,snow,thunderella}. 
\\
\indent
The second novel requirement of blockchains is the explicit economic
incentivization on behalf of participants.  In contrast to classical 
consensus protocols, where it is simply assumed that some set of nodes is 
interested in coming to consensus, in Bitcoin this
incentivization is created through the use of block rewards and transaction
fees.  Again, several recent proposals have worked to address this question of
incentives~\cite{ouroboros,fruitchains,spacemint,solidus,casper}.  Typically, 
however, the
analysis of these proposals has focused on proving that following the protocol
is a Nash equilibrium (NE), which captures the case of rational players but not
ones that are Byzantine (i.e., fully malicious).\\
\indent
Indeed, despite these advances, one could argue that the only consensus 
protocol to fully address both requirements is still the one underlying 
Bitcoin, which is typically known as
Nakamoto or proof-of-work (PoW)-based consensus.  This has in fact been proved
secure~\cite{Garay2015,blockchain-asynchronous}, but provides Sybil 
resistance only by requiring enormous
amounts of computational power to be expended.  As such, it has also been
heavily criticized for the large amount of electricity that it uses.
Furthermore, it has other subtle limitations; e.g., it does not achieve any
notion of fully deterministic finality and some attacks have been found on its
incentive scheme~\cite{optimal-selfish-mining,bitcoin-without-block-reward}.

\paragraph*{Our contributions}

In this paper, we propose \betting, a new blockchain-based consensus protocol
that fully incorporates an incentive design to prove security properties in a settings
that considers both rational and Byzantine adversaries.
Our
initial observation is that the PoW-based setting contains an implicit
investment on the part of the miners, in the form of the costs of hardware and
electricity.  In moving away from PoW, this implicit investment no longer
exists, giving rise to new potential attacks due to the fact that
creating blocks is now costless.  It is thus necessary to compensate by adding 
explicit punishments into the protocol for participants who misbehave. This 
is difficult to do in a regular blockchain setting. 
In particular, 
blockchains do not reveal information about which other blocks miners may have 
been aware of at the time they produced their block, so they cannot be 
punished for making ``wrong'' choices.  We thus move to the setting of 
\emph{blockDAGs}~\cite{spectre,phantom}, which 
induce a more complex fork-choice rule and expose more of the decision-making 
process of participants.
Within this model, we are able to leverage the requirement that players must
place \emph{security deposits} in advance of participating in the consensus
protocol to achieve two things.  First, we can implement punishments by taking
away some of the security deposit, and thus incentivize rational 
players to follow the protocol.  Second, because this allows the players to 
be identified, we can provide a decentralized \emph{checkpointing} system,
which in turn allows us to achieve a notion of finality.\\
\indent
Along the way, we present in Section~\ref{sec:construction} a 
leader election protocol, \sysname, that is specifically designed for open
blockchains, and that we prove secure in a model presented in
Section~\ref{sec:model}.  
We then use \sysname as a component in the broader \betting consensus
protocol, which we present in Section~\ref{sec:protocol} and argue for the 
security of in Section~\ref{sec:sim}.  Here we rely on \sysname to address the
first requirement of scaling in blockchain-based consensus protocols, so can 
focus almost entirely on the second requirement of incentivization.  
\\
\indent
In summary, we make the following concrete contributions:
(1) we present the design of a leader election 
protocol, \sysname.  In addition to provably satisfying more traditional 
notions of security, \sysname has several ``bonus'' properties; e.g., it 
ensures that leaders are revealed only when they take action, which prevents
them from being subject to the DoS attacks possible when their eligibility is 
revealed ahead of time.
(2) we present the design and simulation of a full blockchain-based
consensus protocol, \betting, that provides a scheme for incentivization that
is robust against both rational and fully adaptive Byzantine adversaries.  
\betting is compatible with proof-of-stake (PoS), but could also be
used for other ``proof-of-X'' settings with an appropriate leader election
protocol.

\section{Related}\label{sec:related}

The work that most closely resembles ours is the cryptographic literature on
proof-of-stake (PoS). We evaluate and compare each protocol along the two
requirements outlined in the introduction of scalability and incentivization.
Other non-academic work proposes PoS 
solutions~\cite{ppcoin,neucoin,tendermint}, which is related to
\betting in terms of the recurrent theme of punishment in the case of 
misbehavior.\\
\indent
In Ouroboros~\cite{ouroboros}, the honest strategy is a $\delta$-Nash 
equilibrium, which addresses the question of incentives.  The leader election,
however, is based on a coin-tossing scheme that requires a relatively large
overhead.  This is addressed in Ouroboros Praos~\cite{praos}, which utilizes
the same incentive structure but better addresses the question of scalability
via a more efficient leader election protocol (requiring, as we do in
\sysname, only one broadcast message to prove eligibility).  A recent
improvement, Ouroboros Genesis~\cite{ouroboros-genesis}, allows for dynamic
availability; i.e., allows offline parties to safely bootstrap the blockchain
when they come back online.\\
\indent
A comparable protocol is Algorand~\cite{algorand}, which proposes
a scalable Byzantine agreement protocol and the use of a ``cryptographic
sortition'' leader election protocol, which resembles \sysname.  They do not
address the question of incentives. \\
\indent
In Snow White~\cite{snow}, the incentive structure is based on
that of Fruitchains~\cite{fruitchains}, where honest mining is a 
NE resilient to coalitions.  The incentive structure of
Thunderella~\cite{thunderella} is also based on Fruitchains, but is PoW-based.\\
\indent
Casper~\cite{casper-econ} is still work in progress, so it is difficult to say
how well it addresses scalability.  On the topic of incentivization, it
proposes that following that protocol should be a Nash equilibrium
and that an attacker should lose more in an attack than the victims
of the attack.
A closely related recent paper is Hot-Stuff~\cite{hotstuff}, which 
proposes a PBFT-style consensus protocol that operates in an asynchronous
model.  Again, this paper does not address incentivization.
\\
\indent
Among other types of consensus protocols, there are several that do closely 
consider the topic of incentivization.  The first version of 
Solidus~\cite{solidus}, which is a consensus protocol based on PoW, provides 
a $(k,t)$-robust equilibrium for any 
$k+t<f$, although they leave a rigorous proof of this for future work.
SpaceMint~\cite{spacemint} is a cryptocurrency based on proof-of-space, and
they prove that following the protocol is a Nash equilibrium.
In terms of protocols based on PoW, SPECTRE~\cite{spectre} introduced the 
notion of a blockDAG, which was further refined by PHANTOM~\cite{phantom}.
Our \betting protocol is inspired by PHANTOM (although translated to a non-PoW
setting), and in particular we leverage the notion of connectivity of blocks 
induced by blockDAGs.
Avalanche~\cite{avalanche}, a recent proposal also relying on blockDAG, but in the
context of PoS, recently appeared.
They propose a PBFT-style
consensus protocol that achieves $O(kn)$ for some $k\ll n$, and is thus more expensive than
Caucus that requires a single message broadcast.
They do not consider the question of incentives.
\\
\indent
In terms of economic analyses, Kroll et al.~\cite{Kroll_theeconomics}
studied the economics of PoW and showed that following the protocol is a NE.
Badertscher et al~\cite{butwhy} analyze Bitcoin in the Rational
Protocol Design setting.
The ``selfish mining'' series of 
attacks~\cite{eyal2015minersDilemma,verifierdilemma,optimal-selfish-mining} 
show that the incentive structure of Nakamoto consensus is vulnerable.
Carlsten et al.~\cite{bitcoin-without-block-reward} also consider an attack on 
the incentives in Bitcoin, in the case in which there are no block rewards.
Recently, Ga{\v z}i et al.~\cite{stake-bleeding} proposed an attack on
PoS protocols that we address here.
\\
\indent
Beyond the blockchain setting, Halpern~\cite{beyond} presents 
new perspectives from game theory that go beyond Nash equilibria in terms of
analyzing incentive structures.  One of this concept comes from Abraham et 
al.~\cite{distributed-computing-game-theory}, which we use in
Section~\ref{sec:game-theory}.
\\
\indent
To summarize, most existing proposals for non-PoW blockchain consensus
protocols provide a basic game-theoretic analysis, using techniques like Nash 
equilibria, but do not consider more advanced economic analyses that tolerate 
coalitions or the presence of Byzantine adversaries.  \betting is thus
the first one to place incentivization at the core of its security.  In terms
of scalability, our protocol is again different as it is
not based on a BFT-style algorithm, but is rather inspired by Nakamoto 
consensus (leveraging its economic element).

\section{Background Definitions and Notation}\label{sec:defns}

In this section, we present the underlying definitions we rely on in the rest
of the paper. \ifndss \iffull{We begin (Sections~\ref{sec:notation}-\ref{sec:defns-cointoss}) 
	with the cryptographic notation and primitives
	necessary for our leader election protocol, \sysname.}\else{Due to space
	constraints, the cryptographic notation and primitives necessary for our
	leader election protocol, \sysname, can be found in
	Appendix~\ref{sec:defns-app}.}\fi  \fi
\iffull
\subsection{Preliminaries}\label{sec:notation}
If $S$ is a
finite set then $|S|$ denotes its size and $x\randpick S$ denotes sampling a
member uniformly from $S$ and assigning it to $x$.  $\secp\in\N$ denotes the
security parameter and $\usecp$ denotes its unary representation.
\ifndss
\\
\indent
Algorithms are randomized unless explicitly noted otherwise.  PT stands
for polynomial time.  By $y\gets A(x_1,\ldots,x_n;R)$ we denote running
algorithm $A$ on inputs $x_1,\ldots,x_n$ and random coins $R$ and assigning
its output to $y$.  By $y\randpick A(x_1,\ldots,x_n)$ we denote $y\gets
A(x_1,\dots,x_n;R)$ for $R$ sampled uniformly at random. By
$[A(x_1,\ldots,x_n)]$ we denote the set of values that have non-zero 
probability of being output by $A$ on inputs $x_1,\ldots,x_n$.  Adversaries 
are algorithms. \fi We denote non-interactive algorithms using the font
$\mathsf{Alg}$, and denote interactive protocols using the font
$\mathtt{Prot}$.  We further denote such protocols as
$\outputs\randpick\mathtt{Prot}(\usecp,\participants,\inputs)$, where
the $i$-th entry of $\inputs$ (respectively, $\outputs$) is used to denote 
the input to (respectively, output of) the $i$-th 
participant. 
\\
\indent
We say that two probability ensembles $X$ and $Y$ are statistically close over
a domain $D$ if
$\frac{1}{2}\sum_{\alpha\in D}|\pr[X=\alpha]-\pr[Y=\alpha]|$ is negligible; we
denote this as $X\approx Y$.  
\newcommand\hashdefs{
}
\iffull\hashdefs\fi
Verifiable Random Functions (VRF), first introduced by Micali et al.~\cite{verifiablerandomfunctions},
generate a pseudo-random number in a publicly verifiable way.
Formally, a VRF is defined as follows:
\begin{definition}
A VRF consists in three  polynomial-time algorithm $(\gen,\provepef,\verify)$
that works as follows: 
(1) $\gen(\usecp)$ outputs a key pair $(\pk,\sk)$;
(2) $\provepef_{\sk}(x)$ outputs a pair $(y=\pef_{sk}(x),p=p_{\sk}(x))$;
(3) $\verify(x,y,p)$ verifies that
$y=\pef_{sk}(x)$ using $p$.
An VRF satisfies correctness if:
\begin{itemize}
\item if $(y,p)=\provepef_{\sk}(x)$ then $\verify(x,y,p)=1$ 
\item for every $(\sk,x)$ there is a unique $y$ such that $\verify(x,y,p_{\sk}(x))=1$
\item it verifies pseudo-randomness: for any PPT algorithm $A=(A_E,A_J)$:
\small{
\[ \mathbb{P} \left[ \begin{array}{c|c}
 & (\pk,\sk)\gets\gen(\usecp); \\ b=b' & (x,A_{st})\gets A_E^{\provepef(.)}(\pk); \\  & y_0=\pef_{\sk}(x); y_1\gets\{0,1\}^{len(\pef)};\\
 & b\gets\{0,1\}; b'\gets A_J^{\provepef(.)}(y_b,A_{st}) \end{array} \right] \le \frac{1}{2} +negl(k)
\]}
\end{itemize}
\end{definition}

\newcommand\beacondefs{
	\subsection{Coin tossing and random beacons}\label{sec:defns-cointoss}
	
	Coin tossing is closely related to leader
	election~\cite{collectivecoinflipping}, and allows two or more parties to
	agree on a single or many random bits~\cite{blum-cointossing,collectivecoinflipping,EPRINT:Popov16};
	i.e., to output a value $R$ that is statistically close to
	random.
	A coin-tossing protocol must satisfy \emph{liveness}, \emph{unpredictability},
	and \emph{unbiasability}~\cite{randhound}, where we define these (in keeping with our
	definitions for leader election in Section~\ref{sec:model}) as follows:
	
	\begin{definition}\label{def:coin-liveness}
		Let $\afrac$ be the fraction of participants controlled by an adversary $\A$.
		Then a coin-tossing protocol satisfies \emph{$\afrac$-liveness} if it is still
		possible to agree on a random value $R$ even in the face of such an $\A$.
	\end{definition}
	
	\begin{definition}\label{def:coin-unpredictability}
		A coin-tossing protocol satisfies \emph{unpredictability} if, prior to some step
		$\barrier$ in the protocol, no PT adversary can produce better than a random
		guess at the value of $R$.
	\end{definition}
	
	\begin{definition}\label{def:coin-unbiasability}
		A coin-tossing protocol is \emph{$\afrac$-unbiasable} if for all PT
		adversaries $\A$ controlling an $\afrac$ fraction of participants, the
		output $R$ is still statistically close to a uniformly distributed random
		string.
	\end{definition}
	A concept related to coin tossing is \emph{random beacons}.  These
	were first introduced by Rabin~\cite{rab81} as a service for ``emitting at
	regularly spaced time intervals, randomly chosen integers''.  To extend the
	above definitions to random beacons, as inspired by~\cite{EPRINT:BonClaGol15}, we require that
	the properties of $\afrac$-liveness and $\afrac$-unbiasability apply
	for each iteration of the beacon, or \emph{round}.  We also require that the
	$\barrier$ in the unpredictability definition is at least the
	beginning of each round.  
}
\iffull\beacondefs\fi

\subsection{Blockchains}\label{sec:defns-blockchain}

Distributed ledgers, or blockchains, have become increasingly popular ever
since Bitcoin was first proposed by Satoshi Nakamoto in
2008~\cite{satoshi-bitcoin}.
Briefly, individual Bitcoin users wishing to pay other users broadcast
\emph{transactions} to a global peer-to-peer network, and the peers
responsible for participating in Bitcoin's consensus protocol (i.e., for
deciding on a canonical ordering of transactions) are known as \emph{miners}.
Miners form \emph{blocks}, which contain (among
other things that we ignore for ease of exposition) the transactions collected
since the
previous block was mined, which we denote $\txset$, a pointer to the previous
block hash $h_{\mathsf{prev}}$, and a \emph{proof-of-work} (PoW).  This PoW
is the solution to a computational puzzle.
If future miners choose to include this
block by incorporating its hash into their own PoW then it becomes part of the 
global \emph{blockchain}.
A \emph{fork} is created if two miners find a block at 
the same time.  Bitcoin follows the \emph{longest chain} rule, meaning that 
whichever fork creates the longest chain is the one that is considered valid.  
%


\subsubsection*{BlockDAGs}

As the name suggests, a blockchain is a chain of blocks, with
each block referring to only one previous block.  In contrast, a
\emph{blockDAG}~\cite{spectre,phantom}, is a directed acyclic graph (DAG) of 
blocks.  In our paper (which slightly adapts the original definitions), every
block still specifies a single parent block, but can also \emph{reference} 
other recent blocks of which it is aware.  These are referred to as 
\emph{leaf blocks}, as they are leaves in the tree (also sometimes
referred as the tips of the chain), and are denoted $\dagleaves(\tree)$.   
Blocks thus have the form $\block=(\blockprev,\leaves,\pi,\txset)$ where
$\blockprev$ is the parent block, $\leaves$ is a list of previous leaf blocks, 
$\pi$ is a proof of some type of eligibility (e.g., a PoW), and $\txset$ is 
the transactions contained within the block.  
We denote by $\block.\sender$ the participant that created block $\block$.
%
In addition, we define the following notation: 


\begin{itemize}[leftmargin=0.2cm]

\item $\tree$ denotes a DAG, $\tree^i$ the DAG according to a 
participant $i$, and $\trees$ the space of all possible DAGs.

\item In a block, $\blockprev$ is a direct \emph{parent} of $\block$, and 
$\ancestors(\block)$ denotes the set of all blocks that are parents of 
$\block$, either directly or indirectly.

\item A \emph{chain} is a set of blocks $\mathcal{M}$ such that there exists 
one block $\block$ for which $\mathcal{M}=\ancestors(\block)$.

\item $\past(\block)$ denotes the subDAG consisting of all the blocks that 
$\block$ references directly or indirectly.

\item $\directfuture(\block)$ denotes the set of blocks that directly 
reference $\block$ (i.e., that include it in $\leaves$). 

\item $\anticone(\block)$ denotes the set of blocks $\block'$ such that 
$\block'\notin\past(\block)$ and $\block \notin \past(\block')$.

\item $d$ denotes the \emph{distance} between two blocks in the DAG.

\item The \emph{biggest common prefix DAG} ($\bcpc$) is the biggest subDAG 
that is agreed upon by more than 50\% of the players.
\end{itemize}

\begin{figure}[t]
\centering
	\includegraphics[width=0.7\linewidth]{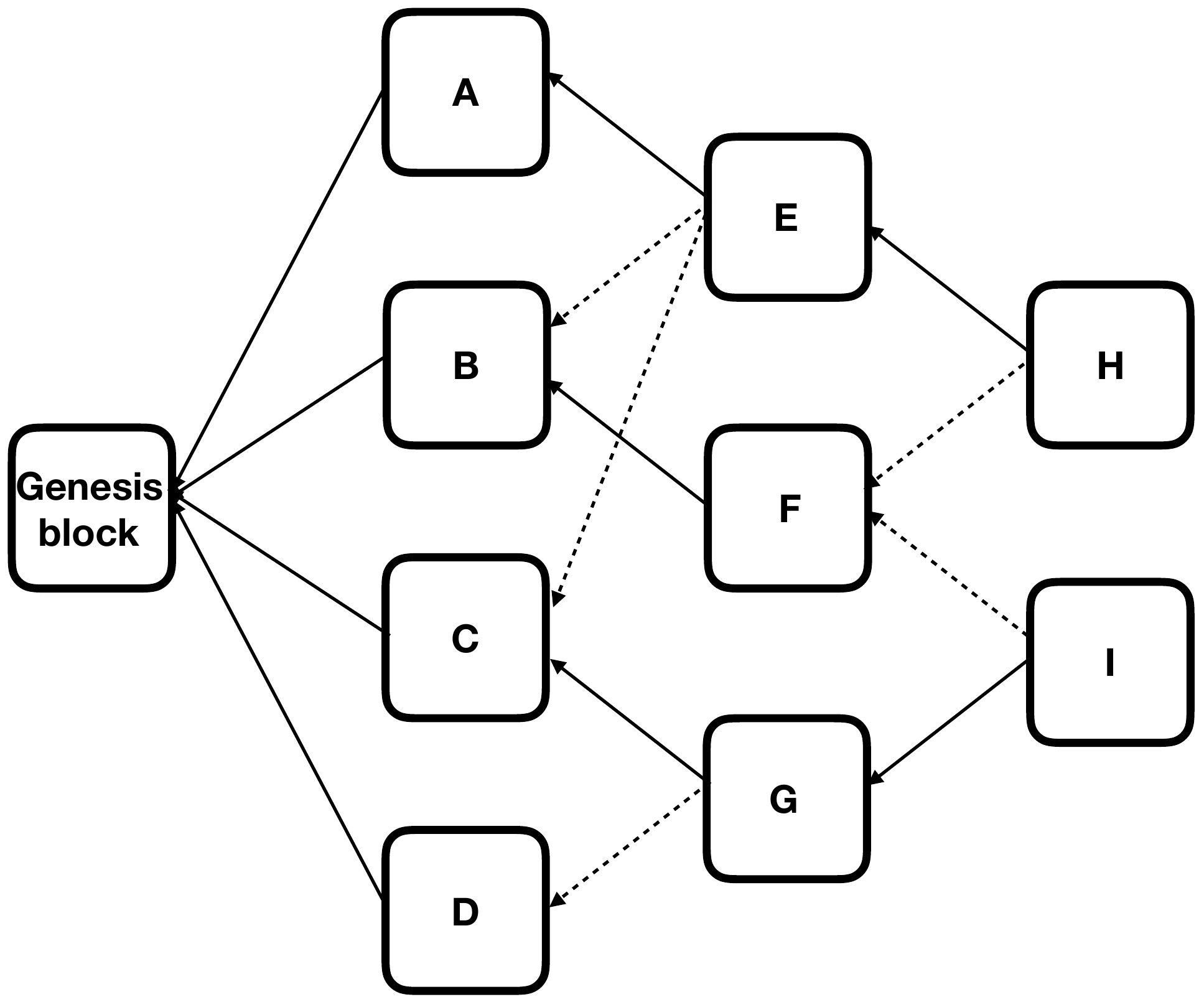}
	\caption{Example of a blockDAG.  A full arrow indicates a bet ($\blockprev$) 
		and a dashed arrow indicates a reference ($\leaves$).}
	\label{fig:blockdag}
\end{figure}

For example, if we consider the blockDAG in Figure~\ref{fig:blockdag},
we have that $\ancestors(H)=\{E,A,\trm{genesis}\}$
and $\{H,E,A,\trm{genesis}\}$ forms a chain.  We also have that 
$\directfuture(F)=\{H,I\}$, $\past(H)=\{E,F,A,B,C,\trm{genesis}\}$, 
$\anticone(E)=\{D,G,I\}$, and $d(A,H)=2$.


\subsubsection*{Proof-of-stake}\label{sec:defns-pos}

By its nature, PoW consumes a lot of energy.  Thus, some alternative consensus
protocols have been proposed that are more cost-effective;
arguably the most popular of these is called \emph{proof-of-stake}
(PoS)~\cite{pos-forum,ppcoin,casper}. If
we consider PoW to be a leader election protocol in which the leader (i.e.,
the miner with the valid block) is selected in proportion to their amount of
computational power, then PoS can be seen as a leader election protocol in
which the leader (i.e., the participant who is \emph{eligible} to propose a
new block) is selected in proportion to some ``stake'' they have in the
system (e.g. the amount of coins they have).  
\\
\indent
As security no longer stems from the fact that it is expensive to create a
block, PoS poses several technical challenges~\cite{interactivepos}.  The
main three are as follows: first, the \emph{nothing at stake} problem says
that miners have no reason to not mine on top of every chain, since mining is
costless, so it is more difficult to reach consensus.  This is an issue of
incentives that, as we present in Section~\ref{sec:protocol}, \betting 
aims to solve. We do so by giving rewards to players that follow the protocol
and punishing players who mine on concurrent chains.
\\
\indent
Second, PoS allows for \emph{grinding attacks}~\cite{interactivepos}, in 
which once a miner is
elected leader they privately iterate through many valid blocks (again,
because mining is costless) in an attempt to find one that may give them an
unfair advantage in the future (e.g., make them more likely to be elected
leader).  This is an issue that we encounter in Section~\ref{sec:construction}
and address using an unbiasable source of randomness.\\
\indent
Finally, in a \emph{long-range attack}, an attacker may bribe miners into
selling their old private keys, which would allow them to re-write the entire
history of the blockchain.  Such a ``stake-bleeding'' 
attack~\cite{stake-bleeding} can be launched
against PoS protocols that adopt a standard longest chain rule and do not
have some form of checkpointing.  We solve this problem in \betting by adding 
a notion of finality, in the form of decentralized checkpointing. 

\subsection{Game-theoretic definitions}\label{sec:game-theory}

In game-theoretic terms, we consider the consensus protocol as
a game in infinite extensive-form with imperfect
information~\cite{econ-notes}, where the utilities
depend on the state of the blockDAG.
A blockchain-consensus game theoretically has an infinite horizon, but here
we consider a finite horizon of some length $T$ unknown to the players.
Following our model
(which we present in Section~\ref{sec:model}), we assume that at 
each node in the game tree player $i$ is in some local state that includes 
their view of the blockDAG and some private information. 
\\
\indent
Following Abraham et al.~\cite{distributed-computing-game-theory}, 
we
consider the following framework.   With each run of the game that results in a
blockDAG $\blockdag$, we associate some \emph{utility} with player $i$, denoted 
$u_i(\blockdag)$. A \emph{strategy} for player $i$ is a (possibly randomized) 
function from $i$'s local state to some set of actions, and tells the player
what to do at each step.
A joint strategy is a \emph{Nash equilibrium} if no player can
gain any advantage by using a different strategy, given that
all the other players do not change their strategies.
An extension of Nash equilibria is \emph{coalition-proof} Nash 
equilibria~\cite{distributed-computing-game-theory}.
A strategy is \emph{$k$-resilient} if a coalition of up to a fraction of $k$ players cannot increase their
utility function by deviating from the strategy, given that other players follow the strategy.
A strategy is \emph{$t$-immune} if, even when a group that comprises a fraction $t$ of the players 
deviate arbitrarily from the protocol, the payoff of the non-deviating players 
is greater than or equal to their payoff in the case where these players do not 
deviate.
A strategy is a \emph{$(k,t)$-robust equilibrium} if it is a $k$-resilient 
and $t$-immune equilibrium.
Similarly, in an $\epsilon$-$(k,t)$-robust equilibrium, players
cannot increase their utility by more than $\epsilon$ by deviating from the
protocol or decrease their utility by more than $1/\epsilon$ in the
presence of a fraction of $t$ irrational players.
\\
\indent
In addition to the players, we assume there exist some other agents that do 
not participate in the game but still maintain a view of the blockDAG.  These
\emph{passive} agents represent full node, and will not accept blocks that are
clearly invalid.  The existence of these agents allows us to 
assume in Section~\ref{sec:sim} that even adversarial players must 
create valid blocks.

\section{Modelling Blockchain Consensus}\label{sec:model}

We present in this section a model for blockchain-based consensus, run amongst
a set of participants (also called players) $\participants$. 
A block $\block$ is considered to be a bet on its ancestors 
$\ancestors(\block)$.  
After introducing the assumptions we make about participants, we
present a model for leader election, which is
used to determine which participants are eligible
to create a block.  We then present a model
for overall blockchain-based consensus, along with its associated security
notions.

\subsection{Assumptions}

We consider a semi-synchronous model~\cite{Dwork:1988} where 
time is divided in units called slots, and each message is delivered within a
maximum delay of $\Delta$ slots (for $\Delta$ unknown to participants).
We assume that all players form a well-connected network, in a way similar to Bitcoin, where they can broadcast a message to their peers, and that communication
comes ``for free.''

\paragraph*{Types of players} We follow the BAR model~\cite{barmodel}, which
means players are either (1) Byzantine, meaning that they behave in an 
arbitrary way; (2) altruistic, meaning that they follow the protocol; or 
(3) rational, meaning that they act to maximize their expected utility.
We use $(\byzfrac,\altfrac,\ratfrac)$ to denote the respective fractions of 
Byzantine, altruistic and rational players.  
Previous research has shown that altruistic behavior is indeed
observed in real-world systems (like Tor or
torrenting)~\cite{distributed-computing-game-theory}, so is reasonable to
consider. 
In addition to these types, as stated in Section~\ref{sec:game-theory} we 
also consider \emph{passive} participants.
They represent the users of a currency who keep a copy of the blockchain and
passively verify 
every block (i.e., the
full nodes). 
They will not explicitly appear in the
protocol, rather we assume that they ``force'' the creation of valid blocks as we explain
in Section~\ref{sec:sim}, since if the chain includes invalid blocks or other obvious forms of misbehavior 
they will simply abandon or fork the currency.
When we say participant or player,
we now mean active player unless specified otherwise.\\
\indent
We consider a semi-permissionless setting, meaning that everyone is allowed to 
join the protocol but they have to place a \emph{security deposit} locking some 
of their funds to do so; if they wish to leave the protocol, then they must 
wait some period of time before they can access these funds again.
This allows us to keep the openness of decentralization while preventing Sybils.
Moreover we consider a flat-model meaning that one participants
account for one \emph{unit} of stake. Thus saying two-third of participants
is equivalent to saying participants that together own two-third of the stake that is in deposit.
Most of the paper makes the assumptions of dynamic committee
where participants can leave and join as explained above. However,
to consider a notion of finality, we will need to strengthen this assumption.
We will detail these assumptions in Section~\ref{sec:protocol},
but briefly we will allow for a reconfiguration period, and assume
that outside of this period the set of participants is fixed.

\subsection{A model for leader election}\label{sec:leader-model}

Most of the consensus protocols in the distributed systems literature
are leader-based, as it is the optimal solution in term of
coordination~\cite{leaderless-consensus}.  Perhaps as a result, leader
election has in general been very well studied within the distributed systems
community~\cite{saks-leaderelection,russel-leaderelection,feige-leaderelection,orv-leaderelection,king-leaderelection}.
Nevertheless, to the best of our knowledge the problem of leader election has
not been given an extensive security-focused treatment, so in
this section we provide a threat model in which we consider a variety of
adversarial behavior.  
%
\\ \indent
Each participant maintains some private state
$\privatestate$, and their view of the
public state $\pubstate$.  For ease of exposition, we assume each
$\privatestate$ includes the public state $\pubstate$.  We refer to a
message sent by a participant as a \emph{transaction}, denoted $\tx$, where
this transaction can either be broadcast to other participants (as in a more
classical consensus protocol) or committed to a public blockchain.
\\ \indent
Our model
consists of three algorithms
and one interactive protocol, which behave as follows:
\begin{description}[itemsep=1pt,leftmargin=0.2cm]
\item[$(\privatestate,\txdep)\randpick\commit(\pubstate)$] is used by a participant to commit themselves to participating
in the leader election.  This involves
establishing both an initial private state $\privatestate$ and a public
announcement $\txdep$.
\item[$\{\privatestate^{(i)}\}_i\randpick\coord(\usecp,\participants,
\{(\round,\privatestate^{(i)})\}_i)$] is run amongst the committed
participants, each of whom is given $\round$ and their own
private state $\privatestate^{(i)}$, in order to update both the
public state $\pubstate$ and their own private states to prepare the leader
election for round $\round$.
\item[$\txrev\randpick\reveal(\round,\privatestate)$] is used by
a participant to broadcast a proof of their eligibility $\txrev$ for round
$\round$ (or $\bot$ if they are not eligible).
\item[$0/1\gets\verify(\pubstate,\txrev)$] is used by a participant
to verify a claim $\txrev$.
\end{description}

We would like a leader election protocol to achieve three security properties:
\emph{liveness}, \emph{unpredictability}, and \emph{fairness}.  The first
property maps to the established property of liveness for
classical consensus protocols, although as we see below we consider several
different flavors of unpredictability that are specific to the
blockchain-based setting.  The final one, fairness (related to 
\emph{chain quality}~\cite{Garay2015}), is especially important in open
protocols like blockchains, in which participation must be explicitly
incentivized rather than assumed.
\\ \indent
We begin by defining liveness, which requires that consensus can be achieved
even if some fraction of participants are malicious or inactive.

\begin{definition}[Liveness]\label{def:liveness}
Let $\afrac$ be the fraction of participants controlled by an adversary $\A$.
Then a leader election protocol satisfies \emph{$\afrac$-liveness} if it is
still possible to elect a leader even in the face of such an $\A$; i.e., if
for every public state $\pubstate$ that has been produced via
$\coord$ with the possible participation of $\A$, it is still possible for at
least one participant, in a round $\round$, to output a value $\txrev$ such that
$\verify(\pubstate,\txrev)=1$.
\end{definition}

Unpredictability requires that participants cannot predict which 
participants will be elected leader before some time.

\begin{definition}[Unpredictability]
A leader election protocol satisfies \emph{unpredictability} if, prior to some
step $\barrier$ in the protocol, no PT adversary $\A$ can produce better than
a random guess at
whether or not a given participant will be eligible for round $\round$, except with negligible
probability.  If
$\barrier$ is the step in which a participant broadcasts $\txrev$, and we
require $\A$ to guess only about the eligibility of honest
participants (rather than participants they control), then we say it
satisfies \emph{delayed} unpredictability.  If it is still difficult for $\A$
to guess even about their own eligibility, we say it satisfies
\emph{private} unpredictability.
\end{definition}

Most consensus protocols satisfy only the regular variant of unpredictability
we define, where $\barrier$ is the point at which the $\coord$ interaction is
``ready'' for round $\round$ (e.g., the participants have completed a
coin-tossing). \ifndss This typically occurs
at the start of the round, but may also occur several rounds beforehand.\fi
\\ \indent
If an adversary is aware of the eligibility of other participants ahead of
time, then it may be able to target these specific participants for a
denial-of-service (DoS) attack, which makes achieving liveness more difficult.
This also helps to obtain security against a fully adaptive adversary that is able to
dynamically update the set of participants it is corrupting. (Since they
do not know in advance which participants to corrupt to gain an advantage.)
A protocol that satisfies delayed
unpredictability solves this issue, however, as participants reveal their
own eligibility only when they choose to do so, by which point it may be too
late for the adversary to do anything. (For example, in a proof-of-stake
protocol, if participants include proofs of eligibility only in the blocks
they propose, then by the time the leader is known the adversary has nothing
to gain by targeting them for a DoS attack. Similarly, an adversary cannot know
which participants to corrupt in advance because it does not know if they will
be eligible.)
\\ \indent
A protocol that satisfies private unpredictability, in contrast, is able to
prevent an adversary from inflating their own role as a leader.  For
example, if an adversary can predict many rounds into the future what their
own eligibility will be, they may attempt to bias the protocol in their favor
by grinding through the problem space in order to produce an
initial commitment $\txdep$ that yields good future results.
\\ \indent
Private unpredictability thus helps to guarantee fairness, which we define as
requiring that each committed participant is selected as leader equally often.
While for the sake of simplicity our definition considers equal weighting of
participants, it can easily be extended to consider participants with respect
to some other distribution (e.g., in a proof-of-stake application, participants
may be selected as leader in proportion to their represented ``stake'' in the
system).


\begin{definition}[Fairness]
A leader election protocol is fair if for all PT adversaries
$\A$
the probability that $\A$ is selected as leader is nearly uniform;
i.e., for all $\round$, $\privatestate$, $\pubstate$ (where again
$\pubstate$ has been produced by $\coord$ with the possible participation of
$\A$), and $\txrev$ created by $\A$,
$
\pr[\verify(\pubstate,\txrev)=1]\approx 1/n.
$
\end{definition}
\saraha{uniform for one participant}

\subsection{Blockchain-based consensus}\label{sec:protocol-definitions}

As with leader election, each participant maintains some private state
$\privatestate$ and some view of the public state $\pubstate$.
We consider the following set of algorithms run by participants:

\begin{description}[itemsep=1pt,leftmargin=0.2cm]

\item[$\privatestate \randpick \setup(\usecp)$] is used to establish the 
initial state: a public view of the blockchain (that is the same for every 
player) and their private state. This includes the $\commit$
phase of the leader election protocol.

\item[$\pi\randpick\eligible(\block,\privatestate)$] is run by each 
participant to determine if they are eligible to place a bet on a block $\block$. 
If so, the algorithm outputs a proof $\pi$ (and if not it outputs $\bot$), that 
other participants can verify using $\verify(\pubstate,\pi)$. The
$\verify$ algorithm is the same as the one used in the leader election
protocol, where $\pi$=$\txrev$.

\item[$\block\randpick\bet(\privatestate)$] is used to create a new 
block and bet on some previous block.

\item[$\block\gets\fcr(\blockdag)$] defines the fork-choice rule that
dictates which block altruistic players should bet on.
To do so, it gives an explicit \emph{score} to different chains,
and chooses the tip of the chain with the biggest score.

\item[$0/1\gets\verifyblock(\blockdag,\block)$] determines whether or not a 
block is valid, according to the current state of the blockDAG.

\item[$\map\gets\lab(\blockdag)$] defines a function $\lab:\trees\to\maps$ 
that takes a view of the blockDAG and associates with every block a label in 
$\{\winner,\loser,\neutral\}$.  This is crucial for incentivization, and is used
to determine the reward that will be associated with every block.   
With each map $\lab$, and player $i$, we associate a utility
function $u_i^\lab$ that takes as input a blockDAG and outputs
the utility of player $i$ for that blockDAG.
We will write $u_i^{\lab}=u_i$ if the label function
is clear from context.
The list of winning blocks constitutes a chain, which we call the \emph{main
chain}.  Every chain of blocks that is not the main chain is called a 
\emph{fork}. \sarahm{Ideally I'd like to find somewhere else to put this, but
it's not a huge priority.}
\end{description}

We would like a blockchain consensus protocol to satisfy a few
security properties.  Unlike with leader election, these have been carefully
considered in the cryptographic
literature~\cite{Garay2015,blockchain-asynchronous}.  
As defined by Pass et al.~\cite{blockchain-asynchronous}, there are four
desirable properties of blockchain consensus protocols: 
consistency, future self-consistency, chain growth, and chain quality.
\\ \indent
Chain growth~\cite{Garay2015} corresponds to the concept of liveness in 
distributed systems, and says the chain maintained by honest players
grows with time. 
Consistency (also called common prefix~\cite{Garay2015}) and future 
self-consistency both capture the notion of safety traditionally used
in the distributed systems literature. Consistency states 
that any two honest players should agree on their view of the chain
except for the last $Y$ blocks, and future self-consistency states that a 
player and their ``future self'' agree on their view of the chain except for 
the last $Y$ blocks (the idea being that a player's chain will not change 
drastically over time).
In our paper, we consider not just a blockchain-based
consensus protocol, but in fact one based on a blockDAG.  Participants thus do
not keep only the longest chain, but all the blocks they receive, which makes
it difficult to use these definitions as is.  Instead, we use the notion of
\emph{convergence},
which states that
after some time,
player converges towards a chain, meaning that no two altruistic players
diverge on their view of the main chain except perhaps for the last blocks,
and that a block that is in the main chain at some time $\tau_0$
will still be in the main chain for any time $t>\tau_0$.
We also ask that this condition holds for a chain of any length, thus capturing
the chain growth (or liveness) in the same definition.
More formally we define convergence as follows:
\begin{definition}[Convergence]
For every $k_0\in\mathbb{N}$, there exists a chain $\chain^0_{k_0}$ of length
$k_0$ and time $\tau_0$ such that: for all altruistic players $i$ and
time $t>\tau_0$: $\chain^0_{k_0}\subseteq \chain^{i,t}$,
except with negligible propability, where $\chain^{i,t}$ denotes the main chain
of $i$ at time $t$.
\end{definition}



Chain quality~\cite{Garay2015} corresponds to the notion of fairness, and says
that honest players contribute some meaningful fraction of all blocks in the 
chain.  

\begin{definition}[Chain quality]
Chain quality, parameterized by $\alpha$, says that an adversary controlling 
a fraction $\coalfrac=\byzfrac+\ratfrac$ of non-altruistic players can 
contribute at most a fraction $\mu=\coalfrac+\alpha$ of the blocks in the main 
chain.
\end{definition}

Finally, we define a relatively unexplored property in blockchain
consensus, \emph{robustness}, that explicitly captures the incentivization 
mechanism. The security notion is not considered by any consensus
protocols (except~\cite{solidus} that states that they leave
the proof for future work) and is paramount to capture the security
of systems where incentives are at the core.\saraha{improve this sentence}

\begin{definition}[Robustness]
A protocol is $\epsilon$-robust if given some fractions 
$(\byzfrac,\altfrac,\ratfrac)$ of BAR players,
following the protocol is a $\epsilon$-$(\ratfrac,\byzfrac)$-robust equilibrium.
\end{definition}

\section{\sysname: A Leader Election Protocol}\label{sec:construction}
In this section, we present \sysname, a leader election protocol with
minimal coordination that satisfies fairness, liveness, and strong notions of
unpredictability.

\subsection{Our construction}\label{sec:caucus-construction}
\begin{figure*}[t]
\begin{framed}
\centering
\begin{description}[itemsep=1pt]
\item [$\deposit$:] A participant
commits to their \eligibilitytest secret key $\sk$ by creating a $\deposit$ transaction $\txdep$ that
contains the VRF public key $\pk$.  Each broadcast commitment
is added to a list $\commitlist$ maintained in $\pubstate$, and that
participant is considered eligible to be elected leader after some fixed
number of rounds have passed.  

\item [$\coord$:] Once enough participants are committed, participants run
a secure coin-tossing protocol to obtain a random value $R_1$.  They output a
new $\pubstate=(\commitlist, R_1)$.  \textbf{This interactive protocol is run
only for $\round = 1$.}

\item [$\reveal$:] For $\round > 1$, every participant verifies their own
eligibility by checking if $H(y_\round)<\target$,
where $y_\round=G_{\sk}(R_\round)$ and
$\target=\hashmax/\participantsnumber_\round$. (Here
$\participantsnumber_\round$ is the number of eligible participants; i.e., the
number of participants that have committed a sufficient number of rounds before
$\round$ and possibly have not been elected leader in the previous rounds.) The eligible participant, if one
exists, then creates a transaction $\txrev$ with their data
$y_\round$ and $p_\round=p_{\sk}(R_\round)$ and broadcasts it to their peers.

\item [$\verify$:] Upon receiving a transaction $\txrev$ from a participant
$i$, participants extract $y_\round$ and $p_\round$ from $\txrev$ and check whether or not
$\verify_\trm{VRF}(R_\round,y_\round,p_\round)=1$.  If these checks pass,
then the public randomness is updated as $\globalrandao_{\round+1}\gets
\globalrandao_\round\oplus y_\round$ and they output $1$, and otherwise
the public state stays the same and they output $0$.
\end{description}
\end{framed}
\caption{The \sysname protocol.}
\label{fig:construction}
\end{figure*}
The full \sysname protocol is summarized in Figure~\ref{fig:construction}.  
Our construction is similar to that of Algorand~\cite{algorand}.
We, however, add a secure initialization of the random beacon and use Verifiable
Random Delays to achieve liveness.\\
\indent
\ifndss
We assume participants have generated signing
keypairs and are aware of the public key associated with each
other participant (which can easily be achieved at the time participants
run $\deposit$).  We omit the process of generating keys from our formal
descriptions.
\\ \indent
\fi
To to be considered as potential leaders, participants must place a
\emph{security deposit}, which involves creating a commitment to their
\eligibilitytest secret key.  
This means the $\deposit$ function runs $\gen$ and returns the 
\eligibilitytest secret key as the private state of the participant and the \eligibilitytest public key 
(incorporated into a transaction) as the transaction to add to a list of
commitments $\commitlist$ in the public state.
\\ \indent
In the first round, participants must interact to establish a shared random
value.  This can be done by running a coin-tossing protocol
to generate a random value $\globalrandao_1$.  We suggest using 
SCRAPE~\cite{scrape}, due to its low 
computational complexity and compatibility with public ledgers.  Any solution
that instantiates a publicly verifiable secret sharing (PVSS) scheme, however,
would also be suitable. The only requirement that we have is that the PVSS
should output a value that is of the same type as the value output by the VRF function $\pef$.
\\ \indent
For each subsequent round, participants then verify whether or not they
are eligible to fold their randomness into the global value by
checking if $H(G_{\sk}(\globalrandao_{\round}))
< \target$, where the value of $\target$ depends on the number of
expected leaders per round (for example, we choose $\target=\hashmax/\participantsnumber_\round$
in order to have on expectation one leader per round).
If this
holds, then they reveal $y_\round=G_{\sk}(\globalrandao_{\round})$
and $p_\round=p_{\sk}(\globalrandao_{\round})$ to the other participants, who can
verify that the inequality holds and that $\verify_{\trm{VRF}}(\globalrandao_{\round},y_\round,p_\round)=1$.
If the participant is in fact eligible,
then they are deemed to be the leader for that round and the global 
randomness is updated as $\globalrandao_{\round+1}\gets
\globalrandao_{\round} \oplus y_\round$.
\\ \indent
To fully achieve security, we describe two necessary alterations to the basic
protocol as presented thus far.  First, in order to maintain unpredictability,
participants should become eligible only after some fixed number of
rounds have passed since they ran $\commit$.  This means updating $\verify$ to
also check that
$\round_\mathsf{joined} > \round - x$ (where $\round_\mathsf{joined}$ is the
round in which the participant broadcast $\txdep$ and $x$ is the required
number of interim rounds).
This has the effect that an adversary 
controlling $\afrac$ participants cannot privately predict that they will be
elected leader for a few rounds and then grind through potential new secret key
values to continue their advantage via new commitments, as the probability 
will be sufficiently high that at least one honest participant will be elected 
leader between the time they commit and the time they participate.
\\ \indent
Second, it could be the case that in some round, no participant is elected
leader. It could also be the case that an adversary is the only elected leader
and does not reveal their proof of eligibility and abort.
To maintain liveness, we alter the protocol so that if no participant 
reveals $\txrev$, we ``update'' the random beacon 
as $\globalrandao_\round\gets F(\globalrandao_\round)$, where $F$ is a
deterministic function that acts as a 
\emph{proof-of-delay}~\cite{proof-of-delay,vdf}.
The simplest example of such a
function is $F = H^p$; i.e., a hash function iterated $p$ times.
When a honest player is not eligible on the winning block,
they start computing the proof-of-delay and if by the time it is computed
no leader has been revealed, they check their eligibility with 
the updated beacon $F(\globalrandao_\round)$.
One can think of this as 
re-drawing the lottery after some delay. 
This allows
participants to continue the protocol (and, since it is purely deterministic,
is different from proof-of-work), but has the downside that verification is
also costly, as it requires participants to re-compute the hashes.
 B{\"u}nz et al. recently proposed an efficient Verifiable Random Delay
 function~\cite{vdf} that can be efficiently and publicly verified yet requires
sequential steps to compute.\saraha{add other citations since there are three different now}
This
proof-of-delay should be used sparingly so we consider it
acceptable that it is expensive.
Moreover we make the assumptions that the time $\delta$ that it takes
to compute the proof-of-delay is bigger than the parameter $\Delta$
of our semi-synchronous model.
\\ \indent
It could also be the
case, however, that there are two or more winners in a round. \ifndss
In a setting such as proof-of-stake, being
elected leader comes with a financial reward, and conflicts may arise if two
winners are elected (such as the nothing-at-stake problem discussed in
Section~\ref{sec:defns-pos}). One potential solution (also suggested by
Algorand~\cite{algorand}) for electing a single leader is to
require all participants to submit their
winning values $y_{\round}$ and then select
as the winner the participant whose pre-image $y_{\round}$ has the lowest bit
value. \fi This problem is investigated further in Section~\ref{sec:protocol}, 
where we present the full \betting protocol.
\\ \indent
Finally, we describe a third, optional alteration designed to improve
fairness by forcing more rotation amongst the leaders.  To be
elected, we require that a participant has not acted as leader in the past
$(\participantsnumber_\round-1)/2$ rounds, which can be implemented by adding a condition in the
$\verify$ function and using $(\participantsnumber_\round+1)/2$ in place of 
$\participantsnumber_\round$ in the computation of the value $\target$.

\subsection{Security}\label{sec:leader-security}
In order to prove the security of \sysname as a whole, we first prove the
security of its implicit random beacon.
\begin{lemma}\label{thm:randao}
If $H$ is a random oracle and $\globalrandao$ is initialized using a secure
coin-tossing protocol, then the random beacon $R_\round$ is also secure; i.e., 
it satisfies liveness (Definition~\ref{def:coin-liveness}),
unbiasability (Definition~\ref{def:coin-unbiasability}), and unpredictability
(Definition~\ref{def:coin-unpredictability}) for every subsequent round.
\end{lemma}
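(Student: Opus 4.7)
The plan is to prove the three properties---liveness, unbiasability, and unpredictability---separately via induction on the round index, using the secure coin-tossing initialization to establish the base case at $\globalrandao_1$.

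First I would dispense with liveness. The proof-of-delay fallback $\globalrandao_\round \gets F(\globalrandao_\round)$ guarantees that the beacon advances deterministically even when no eligible participant reveals. Since eligibility for each updated beacon value is an independent random-oracle event on $H$, repeated applications of $F$ will with overwhelming probability produce a value on which some honest participant is eligible, so a valid next beacon value is always eventually output (and with respect to the adversary's fraction, this needs only the standard argument that honest participants control a non-negligible share of the hashing events).

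Next I would prove unpredictability by induction on $\round$. The inductive hypothesis is that $\globalrandao_\round$ is unpredictable to any PT adversary before round $\round$ begins. For any honest participant's secret key $\sk$, the value $y_\round = G_{\sk}(\globalrandao_\round)$ is then pseudorandom by the VRF pseudorandomness property, and the adversary cannot compute it without a query to the VRF on behalf of that participant. Since $\globalrandao_{\round+1} = \globalrandao_\round \oplus y_\round$, folding in any such pseudorandom contribution preserves unpredictability; in rounds with no honest reveal, the deterministic update $F(\globalrandao_\round)$ inherits unpredictability from $\globalrandao_\round$ by the random-oracle property of $H$.

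Unbiasability is the most delicate. The base case is given by the coin-tossing guarantee. In the inductive step, rounds with at least one honest eligible participant are handled easily: in the random-oracle model $y_\round$ is uniform, and XOR with a uniform value preserves uniformity. The trouble is rounds in which only adversarial participants are eligible: the adversary then has a binary choice between revealing (producing $\globalrandao_\round \oplus y_\round^{(\mathrm{adv})}$) and withholding (forcing $F(\globalrandao_\round)$), both of which are random-oracle images and therefore independent uniform draws from an external view. This gives the adversary at most one bit of bias per such round. I would then use a Chernoff-style bound over the independent eligibility events to bound the expected number of consecutive adversary-only rounds, and argue that the cumulative statistical distance from uniform over the protocol's lifetime remains negligible in $\secp$.

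The main obstacle will be formalizing the adversary's strategic reveal-or-withhold decisions in the unbiasability argument together with ruling out grinding through the initial commitment. A hybrid argument will need to replace each honest VRF output with a fresh uniform string via VRF pseudorandomness, then bound the advantage conferred by the adversary's per-round binary choice. Crucially, the $x$-round waiting period after $\deposit$ before a new participant becomes eligible is what prevents the adversary from grinding through secret-key choices when they first commit: with high probability at least one honest participant is eligible during the intervening rounds, breaking any chain of predictable future eligibility that the adversary might otherwise try to exploit by tailoring their public key $\pk$.
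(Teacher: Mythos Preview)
Your treatment of liveness and unpredictability tracks the paper's argument closely: the proof-of-delay fallback for liveness, and the reduction to VRF pseudorandomness for unpredictability, are essentially what the paper does (the paper phrases unpredictability as ``if the adversary could guess $R_\round$ it could distinguish $G_{\sk}(R_\round)$ from random,'' which is your inductive step in contrapositive form).

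The divergence is in unbiasability, and here you attempt to prove strictly more than the paper does. The paper's inductive step handles only the benign case: it argues that because $\sk$ was committed before $R_\round$ was known, $y_\round$ is independent of $R_\round$, and hence $R_\round \oplus y_\round$ inherits uniformity. It then explicitly \emph{concedes} the reveal/withhold problem rather than resolving it: ``An adversary controlling many winning values for $R_{\round-1}$, however, may decide on the one they reveal, which makes it difficult to argue for unbiasability when the adversary controls many players. This is a limitation of \sysname that we address (or at least quantify) in our analysis of \betting in Section~\ref{sec:sim}.'' So the paper does not claim statistical closeness to uniform in the presence of strategic withholding; it defers that to the later game-theoretic analysis.

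Your Chernoff-style argument to close this gap does not go through. The fraction of rounds in which only adversarial participants are eligible is a constant depending on $\afrac$, not a quantity that vanishes with $\secp$; and in each such round the adversary's binary choice between $R_\round \oplus y_\round^{(\mathrm{adv})}$ and $F(R_\round)$ already yields constant statistical distance from uniform (for instance, the adversary can always pick the option whose leading bit is $0$). Bounding the number of \emph{consecutive} adversary-only rounds does not help, since isolated such rounds still accumulate bias, and a constant-probability event occurring over polynomially many rounds cannot sum to negligible distance. The conclusion that ``the cumulative statistical distance \ldots\ remains negligible in $\secp$'' is therefore false as stated. The honest route---and the one the paper takes---is to prove unbiasability only via the commitment-independence argument and flag the withholding bias as a residual limitation quantified elsewhere.
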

\begin{proof}
For liveness, we observe that after initialization, no coordination
is required, so any online participant can communicate their own eligibility
to other online participants, allowing them to compute the new random value.
The exception is the case where no participant is elected leader, in which case participants can update their random value by
$R_\round\gets F(R_\round)$ until a leader reveals themselves.
\\ \indent
For unpredictability, we must show that, unless the adversary is
itself the next leader, it is hard to learn the value of
$\globalrandao_\round$ before it receives $\txrev$.  We have
$\globalrandao_\round=\globalrandao_{\round-1}\oplus y_{\round-1}$,
where $\globalrandao_{\round-1}$ is assumed to be known.  In the protocol, 
the adversary sees $\pk$ as part
of the commitment of the relevant honest participant, and if that participant
has run $\reveal$ before it may have also seen $y_\round'=G_{\sk}(R_\round')$
for $\round' < \round$. %
If the adversary could produce better than a random guess about $\globalrandao_\round$
then they would also produce better than a random guess about $G_{\sk}(R_\round)$
which contradicts its pseudo-randomness.
\saraha{do we want to write a reduction here?}
\\ \indent
For unbiasability, we proceed inductively.  By assumption, $\globalrandao$ is
initialized in a fair way, which establishes the base case.  Now, we assume
that $\globalrandao_{\round-1}$ is uniformly distributed, and
would like to show that $\globalrandao_\round$ will be as well.  By the 
assumption that $R_\round$ is unpredictable, and
thus unknown at the time an adversary commits to their secret key, the distribution
of $y_\round$ and $R_\round$ is thus independent. (As the adversary cannot
grind through private keys since they do not know $R_\round$ at the
time they commit and G verifies unicity.)
%
If we define a value $R$, and
denote $R'\gets R\oplus y_{\round-1}$, then we have
\ifndss{
\begin{align*}
\pr[R_\round = R] &= \pr[y_{\round-1}\oplus R_{\round-1} = y_{\round-1}\oplus R']\\
                   &= \pr[R_{\round-1} = R'],
\end{align*}}
\else{$\pr[R_\round = R] = \pr[y_{\round-1}\oplus R_{\round-1} = y_{\round-1}\oplus R']
                   = \pr[R_{\round-1} = R']$} \fi
which we know to be uniformly random by assumption, thus for every $R'$, we have
$\pr[R_{\round-1} = R']\approx 1/ (2^\ell-1)$ and for every value
$R$, $\pr[R_\round = R]\approx 1/ (2^\ell-1)$, proving the fairness of 
$R_\round$. (Here $\ell$ denotes the bitlength of $R$.)
An adversary controlling many winning values for $R_{\round-1}$, however, may 
decide on the one they reveal, which makes it difficult to argue for
unbiasability when the adversary controls many players.
This is a limitation of \sysname that we address (or at least quantify) in 
our analysis of \betting in Section~\ref{sec:sim}.

\end{proof}

\begin{theorem}\label{thm:caucus}
If $H$ is a random oracle and $R$ is initialized as a uniformly random value,
then \sysname is a secure leader election protocol; i.e., it satisfies
liveness, fairness, delayed unpredictability (where
$\barrier$ is the step at which the elected leader reveals their proof), and
private unpredictability (where $\barrier$ is the step at which the
randomness $\globalrandao_\round$ is fixed).
\end{theorem}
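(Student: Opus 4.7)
The plan is to reduce each of the four claimed properties to Lemma~\ref{thm:randao} (liveness, unbiasability, and unpredictability of the beacon $\globalrandao_\round$), the pseudo-randomness of the VRF $\pef$, and the random oracle model for $H$. The overall strategy mirrors the structure of Lemma~\ref{thm:randao}: once one shows that the input to each participant's eligibility test is fresh, unbiased and unpredictable at the right moment, all four properties fall out individually.

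For \emph{liveness}, I would invoke the beacon's liveness directly: by Lemma~\ref{thm:randao}, $\globalrandao_\round$ is always defined (either updated via a revealed $y_\round$ or advanced deterministically by the proof-of-delay $F$, the latter after time $\delta > \Delta$). Each participant can then independently evaluate $H(\pef_{\sk}(\globalrandao_\round)) < \target$, no further coordination required. With $\target=\hashmax/\participantsnumber_\round$ the expected number of eligible participants per draw is one, and because the proof-of-delay redraws the lottery, with overwhelming probability in some redraw an honest participant will be eligible, yielding a valid $\txrev$. For \emph{fairness}, since $H$ is a random oracle and $\pef_{\sk}(\globalrandao_\round)$ is pseudo-random by the VRF's pseudo-randomness property, the composition $H(\pef_{\sk}(\globalrandao_\round))$ is statistically close to uniform on $\{0,\ldots,\hashmax\}$, so the probability that a specific committed participant passes the threshold is negligibly close to $\target/\hashmax = 1/\participantsnumber_\round$, as required.

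For \emph{delayed unpredictability}, fix an honest participant $i$ with keys $(\pk,\sk)$ and round $\round$. Before $i$ broadcasts $\txrev$, the adversary has seen only $\pk$ and past reveals $y_{\round'}=\pef_{\sk}(\globalrandao_{\round'})$ for $\round'<\round$. An adversary that distinguishes $i$'s eligibility in round $\round$ better than random would, by standard reduction, distinguish $\pef_{\sk}(\globalrandao_\round)$ from uniform (simulating the $\reveal$ oracle using the VRF's $\provepef$ oracle on prior beacon values), contradicting pseudo-randomness. For \emph{private unpredictability}, the adversary knows its own $\sk$ and can compute $\pef_{\sk}(R)$ for any candidate $R$, so the only remaining source of uncertainty is $\globalrandao_\round$ itself; but Lemma~\ref{thm:randao} gives precisely that $\globalrandao_\round$ is unpredictable before it is fixed, so the adversary cannot anticipate the value on which the eligibility test will be run.

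The main obstacle is the careful bookkeeping in the delayed-unpredictability reduction: we need the VRF pseudo-randomness game to remain well-defined even though the adversary adaptively sees partial history and may corrupt participants dynamically. This is handled by the $x$-round waiting period imposed in $\verify$, which ensures that when participant $i$ first becomes eligible, $\globalrandao_\round$ has incorporated contributions from honest participants after $i$'s commit, making $\globalrandao_\round$ independent (in the sense required by Lemma~\ref{thm:randao}) of the challenge $\pk$. Once this independence is pinned down, the reductions to VRF pseudo-randomness and the random oracle property of $H$ are routine, and the four claims combine into the theorem.
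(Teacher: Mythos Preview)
Your proposal is correct and follows essentially the same route as the paper: liveness via the beacon's liveness plus the proof-of-delay redraw, fairness via unbiasability of $\globalrandao_\round$ together with VRF pseudo-randomness (and the waiting period to rule out key-grinding), delayed unpredictability by reduction to the VRF's pseudo-randomness on the honest participant's output, and private unpredictability directly from the beacon's unpredictability in Lemma~\ref{thm:randao}. The paper's own proof is terser and does not spell out the reduction bookkeeping or the role of the $x$-round delay as explicitly as you do, but the decomposition and the key lemmas invoked are the same.
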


\begin{proof}
For liveness, a participant is elected if they broadcast a valid transaction
$\txrev$ such that $H(y_\round)<\target$. If $\coord$
satisfies $\afrac$-liveness then an adversary controlling $\afrac$ participants
cannot prevent honest participants from agreeing on $R_\round$.
In the case where no participants produce a value $y_\round$ such that
$H(y_\round)<\target$, we update the value of $R_\round$ as
described above until one participant is elected.  With a similar argument as
in the proof of Theorem~\ref{thm:randao}, the protocol thus achieves liveness.
\\ \indent
For fairness, a participant wins if $H(G_{sk}(R_\round)) <
\target_\round$.  By the assumption that $R_\round$ is unpredictable, and
unknown at the time an adversary commits to their secret, we have that
an adversary could not have grind through secret keys to bias $G_{sk}(R_\round)$.
Combining this with the assumption that $R_\round$ 
is unbiasable and thus uniformly distributed, we can argue that 
$y_\round=G_{sk}(R_\round)$ is uniformly distributed. This implies that the
probability that the winning condition holds is also uniformly random, as
desired.
\\ \indent
The argument for delayed unpredictability is almost identical to the
one in the proof of
Theorem~\ref{thm:randao}: even when $R_\round$ is known, if $\A$
has not formed $y_\round$ itself then by the pseudo-randomness of G
it cannot predict whether $H(G_{\sk}(R_\round)) < \target_\round$.
(If it could then the adversary could use that to distinguish $G_{\sk}(R_\round)$
from random with an advantage.)
\\ \indent
Finally, private unpredictability follows from the unpredictability of
$R_\round$.
\ifndss
\\ \indent
In terms of the fraction $\coalfrac$ of malicious participants that we
can tolerate, it is $t / n$, where $t$ is the threshold of the PVSS scheme
used to initialize the random beacon.
In the context of proof-of-stake, 
however, we would still need to assume an honest majority. We investigate this
in the next section, where we present the full \betting protocol. \fi
\end{proof}
Even if the initial value was some constant instead of a randomly
generated one, we could still argue that the protocol is fair after the
point that the randomness of at least one honest participant is incorporated
into the beacon. 
This assumption is weakened the longer the
beacon is live, so works especially well in settings 
where the leader 
election protocol is used to bootstrap from one form of consensus (e.g., PoW) 
to another (e.g., PoS), as discussed for Ethereum~\cite{casper}.
\section{\betting: A Consensus Protocol}\label{sec:protocol}
In this section, we present \betting, our full blockchain consensus protocol.
Our focus is on incentives, and in particular on enforcing good behavior even
in settings where no natural incentives or investments exist already.
\\ \indent
Briefly, \betting works as follows: participants bet on the block that has the 
strongest \emph{score}, according to their view of the blockDAG.  They also 
reference all the leaves (i.e., the most recent blocks) of which they are 
aware, to ``prove'' that they are well connected and following the rules.  A 
block is valid if among all of its references, it is indeed betting on the 
one with the higher score.
We argue for its security more extensively in the next section, but give here
some intuition for how it addresses the challenges presented in the PoS
setting introduced in Section~\ref{sec:defns-pos}.  First, \betting solves the 
nothing-at-stake problem by strongly punishing players who do not reference
their own blocks, and ensuring that players bet only on the strongest chain
they see.  As it is not possible for two chains to appear as the strongest at
the same time, this prevents players from betting on multiple chains.
\\ \indent
The grinding attack is prevented largely due to the unbiasability of the 
random beacon in \sysname, and the optional requirement that leaders must 
rotate with sufficient frequency discussed at the end of
Section~\ref{sec:caucus-construction}.
Finally, long-range attacks are thwarted by \betting's finality rule, which
acts as a form of decentralized checkpointing.
\subsection{Protocol specification}\label{sec:protocol-rule}

We specify how to instantiate the algorithms required for a
consensus protocol specified in Section~\ref{sec:protocol-definitions}.  The
$\setup$ and $\eligible$ algorithms are as described for \sysname in
Section~\ref{sec:construction}.  Before presenting the rest of the algorithms, we
give some additional definitions associated with finality in blockDAGs.
\\ \indent
\saraha{maybe move this somewhere else}
In order to make the following definitions, we assume a static set of
players. We then present how to handle a dynamic set.
Let's also recall that thanks to the deposit, the set of players is known to everyone.
A \emph{candidate} block is a block, as its name suggests, that is a 
candidate for finality.  The initial list of candidate blocks consists of 
every block \emph{betting} on the genesis block; i.e., every block that uses 
this as its parent $\blockprev$.
Whenever a block has in its past two-thirds of bets on a candidate,
this block acts as a \emph{witness} to the
finality of that block, so is called a witness block.  If a block $\bone$ is a 
witness for $\block_0$ and $\btwo$ is a witness for $\bone$, we say that 
$\btwo$ is a \emph{second witness} of $\block_0$.  Finally, candidate blocks
belong to a given \emph{rank}, which we denote by $\rank$.  The first set of 
candidate blocks (after the genesis block) belong to $\rank=1$. 
After this 
every block that bets on a second witness block of rank $\rank$
and has a distance of $\mathsf{w}$ with this block is a candidate for rank 
$\rank+1$.
The above process constitutes a decentralized checkpointing.
\\ \indent
Using the sample blockDAG in Figure~\ref{fig:blockdag}, for example, and 
assuming three players and $\mathsf{w}=0$, the initial list of
candidate blocks is $\{A, B, C, D\}$ (which all have rank $\rank=1$).  Block $E$ 
then bets on $A$, as does
block $H$ (since $A$ is an ancestor of $H$), so $H$ can be considered
as a witness block for $A$.  Similarly, $I$ acts as a witness block for
$C$.  Since this now constitutes two-thirds of the participants, $A$ and $C$
become \emph{justified}~\cite{casper}.  If in turn two-thirds of participants
place bets on the associated witness blocks, then the justified block becomes
\emph{finalized}. 
With $\mathsf{w}=0$, the second witness blocks are added to the candidate list accordingly, 
but at rank $\rank=2$.
\\ \indent
In order to handle a dynamic set of players,
once a block is finalized (i.e. once there exists
at least one second witness block)
we allow for a window
of $\mathsf{w}$ blocks for the players to leave or join the protocol.
At the end of this period, the set of players is fixed and the decentralized
checkpointing resumes with as new candidate blocks, all blocks that have a distance
$\mathsf{w}$ with a second witness block for rank $\rank$.
We leave as important future work a solution that would allow players
to leave and join the protocol even during the checkpointing period.

\paragraph*{Fork choice rule}
We present a formal specification of our fork choice rule $\fcr$ in 
Algorithm~\ref{alg:fcr}.  Intuitively, the algorithm chooses blocks with more 
connections to other blocks.  Accordingly, we compute the score of a leaf 
block $\block$ by counting the number of 
outgoing references for every block in its past. We do not count, however,
blocks that have been created by an adversary using the same eligibility proof
multiple times.
We denote as $\double$ the set of all blocks that contains
the same proof of eligibility but different content.
The score of a chain whose tip is $\block$ is then the number of edges 
in the subgraph induced by $\past(\block)\setminus \double$,
and we pick as a ``winner'' the
leaf with the highest score.  If there is a tie (i.e., two leaf blocks have 
the same score), we break it by using the block with the
smallest hash.\footnote{It is important, to avoid grinding attacks, to use the 
    hash as defined in \sysname, or something else similarly unbiasable.}

\begin{algorithm}
\SetKwInOut{Input}{input}
\SetKwInOut{Output}{output}
\Input{a DAG $\tree$} 
\Output{a block $\block$ representing the latest ``winner''}
\If{$(\tree=\textsf{Genesis Block})$} {return $\tree$}
$w\gets\emptyset$ \\
\For{$\block\in\dagleaves(\tree)$}{
\For{$\block'$ in $\past(\block)\setminus \double$}{
$w[\block'] = |\block'[\leaves]|$ 
}
}
$\maxweight\gets \trm{argmax}_{\block\in \leaf(\tree)} w(\block) $ \\
\tcp{{\small if there is a tie choose block with smaller hash}}
$\block\gets \trm{argmin}_{\block\in\maxweight} H(\block)$\\
return $\block$
\caption{Fork-choice rule (FCR)\label{FCR}}
\label{alg:fcr}
\end{algorithm}

\paragraph*{Betting}
To place a bet, a participant first identifies the latest winning block as  
$\block\gets\fcr(\treepub)$. 
They then check their latest second witness block (if they have one)
and verify that at least one candidate block associated with it is also
in $\ancestors(\block)$ (i.e. they verify that the block was not created
maliciously as part of a long range attack).
They then check to see if they are eligible to
act as a leader by computing $\pi\randpick \eligible(\block,\privatestate)$.
If they are (i.e., if $\pi\neq\bot$), then they form a block with $\block$ as
the parent, with all other blocks of which they are aware as the leaf blocks, 
and with their proof of eligibility $\pi$ and set of transactions.

\paragraph*{Block validity}
We now define the rules that make a block valid; i.e., the checks
performed by $\verifyblock(\blockdag,\block)$.  Intuitively, a valid block 
must be betting on the block chosen by the fork-choice rule, and its creator 
must be eligible to bet on that block.  If a player
is aware of a justified block, then they must bet on either that
block or another witness block, but cannot prefer a non-justified block to 
a justified one.

More formally, a new block $\block =
(\blockprev,\leaves,\pi,\txset)$ is valid only if the following hold:

\begin{enumerate}[leftmargin=0.2cm]
\item It is betting on the block chosen by the fork choice rule for 
the blocks of which it is aware; i.e., $\blockprev = \fcr(\past(\block))$.

\item The creator is eligible to bet: 
$\eligible(\blockprev, \privatestate^{\block.\sender}) \neq \bot$.

\item If it references a witness block then it is betting on a witness 
block; i.e., if there exists a witness block in $\past(\block)$
then there exists a witness block in $\ancestors(\block)$. 

\item If it references a second witness block, then it is betting on a block 
in the past of that block: if there exists a second
witness block $\block_s\in\past(\block)$, then 
$\ancestors(\block)\cap\past(\block_{s})\ne\emptyset$.

\end{enumerate}


\subsection{Incentives}\label{sec:incentives}

\paragraph*{Label}
We present a formal specification of our label function $\lab$ in 
Algorithm~\ref{alg:label}.  
Intuitively, if a block is chosen by the $\fcr$ it is labelled $\winner$
and so are all of its ancestors. Blocks that bet on winners are 
labeled $\neutral$.  Following the techniques in PHANTOM~\cite{phantom}, all 
winning and neutral blocks form a subset of the DAG called the \emph{blue} 
subset and denoted by $\blue$.  Every block whose anticone intersects with 
fewer than $k$ blocks in the blue set is labeled $\neutral$, and otherwise it 
is labeled $\loser$.  The parameter $k$ is called the \emph{inter-connectivity}
parameter, and means that a block is allowed to be ``unaware''
of $k$ winning blocks, but not more (as, e.g., these blocks may have been 
created at roughly the same time).

\begin{algorithm}
\SetKwInOut{Input}{input}
\SetKwInOut{Output}{output}
\Input{A DAG $\tree$} 
\Output{A labelling of the block in the DAG $\map$}
set $\block\gets\fcr(\tree)$ \\
$\blue\gets \blue \cup \{\block\}$\\
$\map(\block)=\winner$\\
\For{$\block_i \in \ancestors(\block)$}{
$\blue\gets \blue \cup \{\block_i\}$\\
$\map(\block_i)=\winner$\\
\For{$\block_j\in\directfuture(\block_i)\setminus\ancestors(\block)$}{
$\blue\gets \blue \cup \{\block_j\}$\\
$\map(\block_j)=\neutral$
}
}
\For{$\block_i\in\tree\setminus\blue$}{
\eIf{$\anticone(\block_i)\cap\blue\le k$}{
$\blue\gets \blue \cup \{\block_i\}$\\
$\map(\block_j)=\neutral$
}
{$\map(\block_i)=\loser$}
}
\Return $\map$
\caption{$\lab$}
\label{alg:label}
\end{algorithm}

\paragraph*{Utility functions}
At the end of the game, which we define to be of length $T$, as defined
in Section~\ref{sec:model}, we take the $\bcpc$ and apply the $\lab$ function to it,
in order to associate each block with a state.
For every winning block that a player has added to the $\bcpc$, they win a 
reward of $\rwd(\block)$, and for every losing block they lose $\pun$.
In addition, if a player creates a block that does not reference one of
their own blocks, we add a bigger punishment $\bigpun$ (for example,
blocks that belong to the set $\mathsf{Double}$ defined previously
will add this punishment).  This punishment is
bigger because a player is obviously aware of all their own blocks, so if they
do not reference one it is an obvious form of misbehavior (whereas a block
might end up being labelled a loser for other reasons).
\\ \indent
More formally, we define the following utility function:
\begin{multline}u_i(\bcpc)=\sum_{\substack{\block \in \bcpc \trm{ s.t.}\\
\block.\sender=i\trm{ and } \\M(\block)=\winner}}\rwd(\block)
-\sum_{\substack{\block \in \bcpc \trm{ s.t.}\\ \block.\sender=i\trm{ and }
\\M(\block)=\loser}}\pun \\
+\bigpun\times |N|
\label{eq:utility}
\end{multline}
where $M = \lab(\bcpc)$ and
\ifndss{
\begin{align*}
N=\{(\block_j,\block_k) \trm{ s.t. } &\block_j.\sender=i \ \land\ 
    \block_k.\sender=i\ \land \\
&\block_j\notin\past(\block_k)\ \land\
\block_k\notin\past(\block_j)\}
\end{align*}}
\else{$N=\{(\block_j,\block_k) \trm{ s.t. } \block_j.\sender=i \ \land\ 
    \block_k.\sender=i\ \land 
\block_j\notin\past(\block_k)\ \land\
\block_k\notin\past(\block_j)\}$}\fi

The reward function is proportional to the connectivity of a block; i.e., a block 
that references many blocks receives 
more than a block that references only one other block.
The reason is that we want to incentivize players to exchange
blocks between each other, rather than produce blocks privately (as in a
selfish mining attack).  
In this paper we 
consider a simple function $\rwd(\block)=|\leaves|\times c$ for some constant
$c$, and treat $\pun$ and $\bigpun$ as constants.  We leave the study of more 
complex reward and punishment mechanisms as interesting 
future work.  
\\ \indent
One of the difficulties of dealing with blockchain-based consensus, compared to 
traditional protocols, is that the enforcement of the payoff is achieved only 
by consensus; i.e., the utilities depend on whether or not enough players 
enforce them.  In order to enforce the payoff, we thus assume that participants 
can give a reward to themselves in forming their blocks (similarly to Bitcoin), 
but that evidence of fraud can be submitted by other players.
If another player submits evidence of fraud, the subsequent punishment
is taken from the security deposit of the cheating player.

\section{Security of \betting}\label{sec:sim}
In this section, we show that \betting is secure, according to the model in 
Section~\ref{sec:model}.  We support our proofs with a simulation 
of the protocol as a game played between Byzantine, altruistic, and rational 
players in Section~\ref{sec:simulations}.  

\subsection{Action Space}\label{sec:sim-settings}
We discuss the different strategies available to
a coalition of players, whether Byzantine or rational. They
can take any deviation possible from the game.  We do assume,
however, that they create valid blocks, as otherwise passive players will
simply ignore their chains (as discussed in Section~\ref{sec:model}).
\\ \indent
If an adversary withholds their blocks, it can gain an advantage in
subsequent leader elections.  To see this, consider that after each block 
each player has a probability $1/n$ of being elected leader.  Being a leader 
does not guarantee a winning block, however, as only the block with the 
smallest hash wins.  For each block, the number of subsequent players $k$ that 
are elected leader follows a binomial distribution parameterized by $n$ and 
$1/n$.  Assuming the leader election is secure, each of these leaders is 
equally likely to have the winning block, so each player has probability 
$1/(kn)$ of being the winner. By not revealing a block, this probability goes 
up to $1/n$ (since other players are simply not aware of it), so by keeping 
their chain private an adversary can raise their 
chance of having a winning block.  We thus assume that both Byzantine and
rational players withhold their blocks and grind through all possible
subsequent blocks in order to maximize their advantage.
\\ \indent
In terms of the space that players grind through, the main option they have
when elected leader 
is whether to place a bet or not.  The protocol dictates that they must bet on the
fork-choice rule only, but they may wish to bet on a different block (as,
e.g., doing so could increase their chances of being elected leader in the
future).  In order to still maintain the validity of their blocks, doing so
means they must eliminate references in their set $\leaves$ so that their
chosen block appears as the fork-choice rule (in accordance with the first
check in $\verifyblock$).
Players are always better off, however, including
as many blocks as possible in their references, as it increases the score of
their block.  Thus, they will remove references to blocks that have higher 
scores in order to make their block appear as the fork-choice rule, but not 
more than necessary.
\\ \indent
For rational players, there is a trade-off between not revealing their block
(which raises their chance of having more winning blocks, as argued above)
and revealing their block, which reduces their chance of having more winning 
blocks but increases their reward because it allows their block to have more 
references.  Our simulation investigates this trade-off.
Regardless, the strategy of
rational players is to grind through all possible blocks and broadcast the
chain that maximizes their utility.
Byzantine players, in contrast, do not try to maximize their profit, but
instead play irrationally (i.e., they are not deterred by punishment). 

\subsection{Security arguments}\label{sec:sim-security}
According to the security properties in
Section~\ref{sec:protocol-definitions}, we need to argue three things:
convergence, chain quality, and robustness.  We
support these security properties with our
simulation in Section~\ref{sec:simulations}
\iffull\else{Due to space constraints, full proofs for all of our theorems 
can be found in Appendix~\ref{sec:proofs}.  Nevertheless, we state and provide
some intuition for the theorems here.}\fi 
\newcommand\finality{
We show in Theorem~\ref{thm:finality} that for a given rank, all second witnesses
share the same set of candidate blocks
provided that $f_B<1/3$.
The idea is that whenever a block is finalized, all players have agreed on the current
set of candidate blocks, and thus they should not accept any other candidate blocks at
that rank (as required by the betting algorithm stated in Section~\ref{sec:protocol}). 

\begin{figure}[t]
\centering
\includegraphics[width=0.6\linewidth]{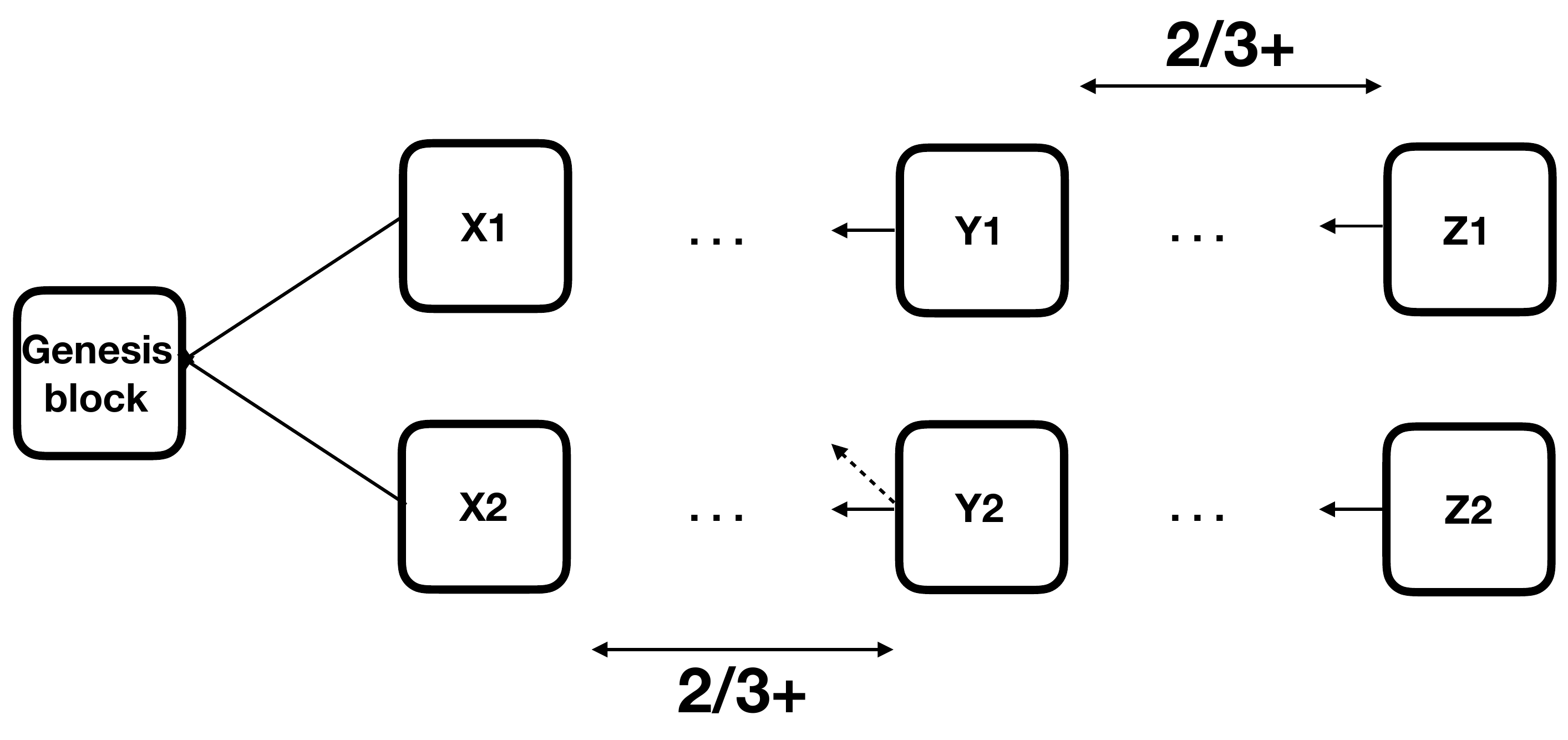}
\caption{A visual sketch of the proof of Theorem~\ref{thm:finality}.
A participant placing a bet between $x_2$ and $y_2$, and 
$y_1$ and $z_1$ must have placed their bet on $x_2$ first, thus $z_1$ 
references $x_2$.}
\label{fig:blockdag-fin}
\end{figure}

\begin{theorem}[Finality]\label{thm:finality}
If $\byzfrac<1/3$,
then once a block at rank $\rank$ is finalized,
players 
agree on a list of candidate blocks for rank $\rank$ (i.e., this list cannot 
grow anymore).
\end{theorem}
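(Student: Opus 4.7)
The plan is to adapt the standard BFT quorum-intersection argument to the blockDAG setting. Assume for contradiction that after a rank-$\rank$ block $C$ is finalized by a witness $W_1$ and a second witness $W_2$, a distinct block $C'$ is added to the candidate list for rank $\rank$ and eventually gains its own witness $W_1'$ and second witness $W_2'$.

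First I would unpack the definitions to expose a voter set on each side. By definition of witness, $\past(W_1)$ contains bets on (a descendant of) $C$ from a set $S\subseteq\participants$ with $|S|>2n/3$, and symmetrically $\past(W_1')$ exposes a set $S'$ with $|S'|>2n/3$ betting on $C'$. Ordinary inclusion-exclusion then gives $|S\cap S'|>n/3$, and since $\byzfrac<1/3$, this intersection must contain at least one honest player $p$.

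Next I would derive a contradiction from the honest rules stated in Section~\ref{sec:protocol-rule}. An honest player bets only on the block chosen by $\fcr$ from their view, and the validity check in $\verifyblock$ (together with the check built into the $\bet$ algorithm, which requires a candidate block associated with the latest second witness to lie in $\ancestors$ of the new bet) forbids them from placing any bet at rank $\rank$ on a candidate that conflicts with one they have already helped witness. Hence $p$ cannot have contributed bets to both the $W_1$-chain for $C$ and the $W_1'$-chain for $C'$, contradicting $p\in S\cap S'$.

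The main obstacle will be handling the DAG-level temporal subtlety illustrated in Figure~\ref{fig:blockdag-fin}: a priori, an honest $p$ might have cast one bet in $\past(W_1)$ early and another, ``straddling'' one of the two witness chains, much later, before they had a fully consistent view. I would close this gap exactly as the caption suggests: once $p$ bets on $x_2$, the fork-choice rule and the validity constraint force every subsequent bet of $p$ to reference $x_2$ in its ancestors (so any later block $z_1$ of $p$ satisfies $x_2\in\ancestors(z_1)$), which rules out $p$ contributing to a witness chain for the competing candidate. Combining the two observations, $|S\cap S'|\le \byzfrac\cdot n<n/3$, contradicting the quorum bound. Therefore no $C'\ne C$ can be finalized at rank $\rank$; applying the $\bet$-time check one more time shows that no honest player will even accept a fresh $C'$ into their candidate list for rank $\rank$ once they hold $W_2$, so the list is frozen.
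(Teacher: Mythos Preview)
Your quorum-intersection step is right, but the argument then takes a wrong turn: you aim for a contradiction (an honest $p$ cannot have bet on both $C$ and $C'$), whereas the paper's proof is not by contradiction at all. Nothing in the protocol ``forbids [an honest player] from placing any bet at rank $\rank$ on a candidate that conflicts with one they have already helped witness''; the $\bet$-time check you cite only enforces consistency with a candidate associated with the \emph{previous} rank's second witness, and does not stop an honest $p$ from betting on $x_1$ at one moment and on $x_2$ later if the $\fcr$ so dictates. The paper instead exploits the fact that non-Byzantine players always reference their own prior blocks (enforced by $\bigpun$): if $p$ bets on both, then one of the witness blocks $y_i$ necessarily has both $x_1$ and $x_2$ in its $\past$. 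That is a positive cross-referencing conclusion, not an impossibility.

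The paper then iterates the intersection argument a second time (between the voters for $y_1$ and those for $x_2$) and invokes the \emph{third} check in $\verifyblock$---a block that references a witness block must itself bet on a witness block---to eliminate the case ``$y_2$ references $y_1$'' and force the remaining case ``$z_1$ references $x_2$''. The upshot is that every second witness $z_i$ has \emph{all} finalized candidates in its $\past$; hence any player holding a second witness sees the same candidate set, which is exactly the statement of the theorem. Two smaller corrections: in the figure caption, ``$z_1$ references $x_2$'' means $x_2\in\past(z_1)$, not $x_2\in\ancestors(z_1)$ as you read it; and the theorem does not assert uniqueness of the finalized block at rank $\rank$, only agreement on the list, so your target ``no $C'\ne C$ can be finalized'' is stronger than what is claimed and, as the paper's argument shows, not what actually holds.
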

\begin{proof}
Assume there exist two finalized blocks $x_1$ and $x_2$.  Denote by $y_i$ the 
witness block for $x_i$, and by $z_i$ the second witness block for $x_i$.
By the definition of finality, it must have been the case that more than 
two-thirds of participants placed bets on $x_1$ and $x_2$, which in turn
implies that more than a third of them placed bets on \emph{both} $x_1$ and $x_2$. 
Since only a third of participants are Byzantine,
this means that at least one non-Byzantine player placed a bet on both blocks.
(Let's recall that during the decentralized checkpointing the set of
players is fixed.)
Non-Byzantine players always reference their own block (since there is a large 
punishment incurred if not); this means that there is one of
the witness blocks $y_i$ that references ``across the chains''; i.e., such that
$x_1,x_2\in\past(y_i)$. 
\\ \indent
Without loss of generality, assume that $x_1,x_2\in\past(y_2)$. This means 
that $x_1\in\past(z_2)$ and thus $x_1,x_2\in\past(z_2)$.
Since more than two-thirds of the participants bet on $y_1$, by a similar 
reasoning as above this means that at least one non-Byzantine player placed 
a bet on both $y_1$ and $x_2$ (before $y_2$).  Thus, either (1) $z_1$ 
references $x_2$ or (2) $y_2$ references $y_1$.  Because of the third check
in $\verifyblock$, however, 
this second case is not possible: since $y_1$ is a witness block, $y_2$ cannot
reference it without betting on a justified block, and $x_2$ is not justified 
before $y_2$.  
We refer the reader to graph~\ref{fig:blockdag-fin} for a visual intuition of 
this.  Thus, it must be
the case that $z_1$ references $x_2$.
Thus $z_i$ references both $x_i$ and $x_j$ for $i\ne j$.
\\ \indent
For a set of candidate blocks, all the second witness blocks thus reference all 
candidate blocks (applying the previous analysis to all the candidates blocks 
pair-wise).  So, after a candidate block is finalized, players must agree on the set 
of candidate blocks for that rank.
\\ \indent
We will investigate the long-range attack in the proof of the convergence
theorem.
\end{proof}
}
\newcommand\generalclaim{
Before proving each security property, we first prove some general results
about the protocol.
Assume that two blocks $\bone$ and $\btwo$ are two competing blocks with the 
same score, betting on the same block. 
We call the stronger chain the one with the higher score according to a 
hypothetical oracle node that collates the views of the blockDAG maintained
by all participants.  We assume that $\bone$ is the leaf of the stronger chain.
\begin{claim}
On average, blocks added to the stronger chain add more to its 
score than blocks added on the weaker chain.
\label{claim:blockweight}
\end{claim}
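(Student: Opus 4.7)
The plan is to translate the informal notion of ``stronger chain'' into a statement about the set of participants whose local view of the blockDAG currently prefers that chain under $\fcr$, and then use this to count the number of leaves a new block on each side can legally include in its $\leaves$ field. Since a block's marginal contribution to the score of its chain is exactly the number of additional parent-edges it adds to the induced subgraph $\past(\block)\setminus\double$, and since every reference in $\leaves$ becomes such an edge, the claim reduces to showing that an average new block on the stronger chain can reference more fresh leaves than an average new block on the weaker chain.

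First, I would introduce notation: let $P_1$ be the set of (altruistic plus rational) players whose local view currently assigns the higher $\fcr$ score to $\bone$'s chain, and $P_2$ the analogous set for $\btwo$'s chain, with $|P_1|>|P_2|$ by the defining assumption that $\bone$ is the stronger leaf. Any altruistic player elected leader after $\bone$ will bet on $\bone$ and reference every leaf in $\dagleaves$ of its local view. Since players in $P_1$ have been seeing and producing blocks in $\bone$'s subDAG, the expected number of recent leaves in an eligible player's view on the $\bone$ side is proportional to $|P_1|$, and analogously $|P_2|$ on the $\btwo$ side (because leaves are produced at a rate proportional to the number of active participants working on that chain).

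Next, I would translate ``score contribution'' directly from the $\fcr$ algorithm: the score of the chain ending at $\block$ is $\sum_{\block'\in \past(\block)\setminus\double}|\block'[\leaves]|$, so adding a new block $\block^\star$ with parent on that chain increases the score by exactly $|\block^\star[\leaves]|$ plus one (the new parent edge itself), provided $\block^\star\notin\double$. Combined with the previous step, the expected increment on the stronger chain is of order $|P_1|$ and on the weaker chain of order $|P_2|$, which yields the desired strict inequality in expectation.

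The main obstacle I expect is handling the pathological cases: blocks that belong to $\double$ (excluded from the score sum entirely), equal-hash tie-breaking which may temporarily flip which chain is chosen by $\fcr$ for a given player, and the contribution of rational/Byzantine players who may withhold or retarget references to skew the counts. I would address these by restricting the averaging to the altruistic fraction $\altfrac$ and noting that withholding by an adversary only lowers the weaker chain's effective reference count (since withheld blocks do not appear as leaves anywhere), so the bound $|P_1|>|P_2|$ still dominates. A brief remark would observe that this asymmetry compounds over rounds, which foreshadows the convergence argument used later in Section~\ref{sec:sim-security}.
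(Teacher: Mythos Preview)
Your proposal misidentifies the mechanism that creates the asymmetry. You argue via population sizes $|P_1|>|P_2|$, claiming that more players on the stronger side produce more leaves, hence a new block there has more to reference. But in the paper's setup the two chains start with the \emph{same} score $S$ (the ``stronger'' one is singled out only by the hash tie-breaker in $\fcr$), so there is no a priori reason to have $|P_1|>|P_2|$, and your population-counting argument never gets off the ground. Moreover, leaves are not partitioned between chains: a block extending $\bone$ is free to reference the tip of $\btwo$'s chain as well, so the number of available leaves is not proportional to the number of players working on each side.

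The actual source of the asymmetry, which your proposal does not invoke, is the first validity check in $\verifyblock$: a new block must satisfy $\blockprev=\fcr(\past(\block))$. A block extending the stronger chain may legally include the weaker leaf in its $\leaves$ set (its $\fcr$ still points to $\bone$), so when aware of both tips it adds $S+2$ to the score, and $S+1$ otherwise; averaging over the $1/2$ chance of having received the weaker tip yields $S+1.5$. A block extending the weaker chain, by contrast, \emph{cannot} reference the stronger tip---doing so would put $\bone$ in its $\past$, forcing $\fcr(\past(\block))=\bone\neq\btwo$ and making the block invalid---so it must drop that reference and contributes only $S+1$ regardless of awareness. The gap $S+1.5$ versus $S+1$ is the content of the claim. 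Your write-up needs to replace the $|P_1|,|P_2|$ counting with this validity-constraint argument.
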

\begin{proof}
\ifndss{
Without loss of generality, we prove the result for two chains with the same 
score $S$.  For simplicity, we assume that the chains do not reference
each other (e.g., chains $\{A,C\}$ and $\{B,D\}$ in Figure~\ref{fig:blockdag-weight}).}
\else{Due to lack of space and for simplicity,
we assume that the chains have the same score $S$
and do not reference
each other (e.g., chains $\{A,C\}$ and $\{B,D\}$ in Figure~\ref{fig:blockdag-weight}).}\fi
By the definition of scoring in the $\fcr$ (Algorithm~\ref{alg:fcr}), a block betting
on the stronger chain adds a score of $S+2$ if the leader is aware of
the leaf block on the weaker chain, 
and $S+2-1=S+1$ otherwise.  (For example, $E$ adds a score of 
$S+2$ by referencing both $C$ and $D$, and $S+1$ by referencing $C$ and $B$ 
if it is not aware of $D$.)
The block that they receive first has probability
0.5 of being stronger or weaker as the random beacon has uniform distribution, thus
on average, the score added by one block to the stronger chain is 
$0.5\cdot(S+2)+0.5\cdot(S+1)=S + 1.5$.
\begin{figure}[h]
\centering
	\includegraphics[width=0.6\linewidth]{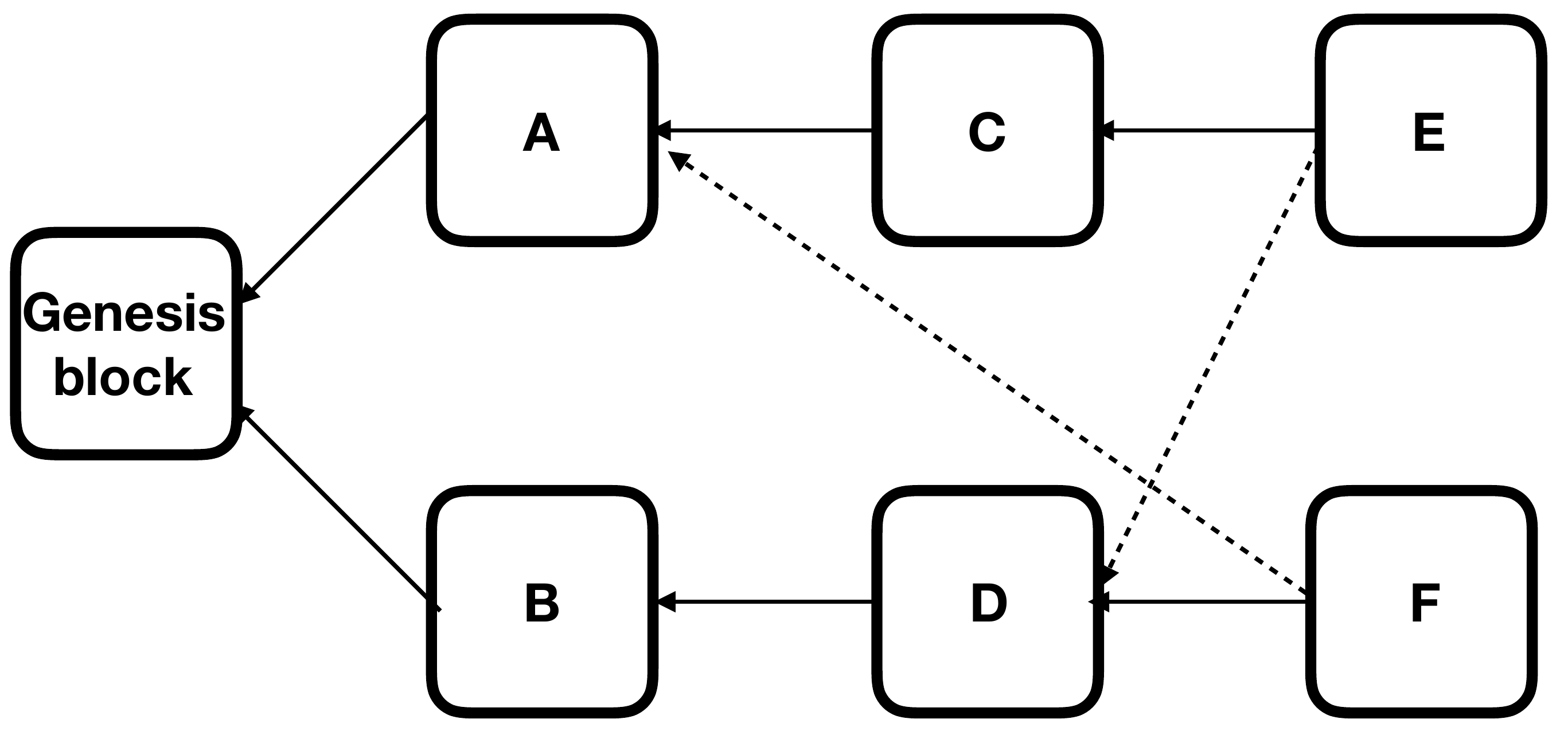}
	\caption{Example of two blocks added on competing chains.}
	\label{fig:blockdag-weight}
\end{figure}
On the weaker chain, in contrast, whether the leader is aware of the stronger leaf block 
or not does not matter since they cannot reference it while maintaining the validity of
their block (according to the first check in $\verifyblock$).
Thus the score added to the weaker chain is 
$(S-1)+2=S+1$, since that block references the stronger chain 
but not on its latest block.
\end{proof}
\finality
}
\iffull\generalclaim\fi
Next, we show convergence. 
Intuitively, the main argument here is that if 
an adversary tries to grow multiple chains to 
prevent altruistic players from agreeing on a main chain, 
altruistic players are still very likely to agree on a chain, since the score 
of the main chain grows faster than the score of other chains. 
Additonally, since an adversary is unlikely to be elected leader for a 
consecutive number of blocks, it is unlikely that other players 
will revert their main chain once they have agreed on one.  
Even if the adversary somehow manages to build their own private chain (by,
e.g., using the proof-of-delay or bribing old participants in a long-range
attack), the decentralized checkpointing mechanism in \betting provides a
notion of finality for blocks.  Thus, by the time they succeed in mounting 
such an attack, it will be too late and other participants will not accept 
their chain.
\begin{theorem}[Convergence]\label{thm:fork}
Given a coalition $\coalfrac<1/3$ of non-altruistic players, 
we have:
for every $k_0\in\mathbb{N}$, there exists a chain $\chain^0_{k_0}$ of length
$k_0$ and time $\tau_0$ such that: for all altruistic players $i$ and
times $t>\tau_0$: $\chain^0_{k_0}\subseteq \chain^{i,t}$,
except with negligible propability, where $\chain^{i,t}$ is the main chain
of player $i$ at time $t$.
\end{theorem}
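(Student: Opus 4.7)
The plan is to combine three ingredients: (i) the score-growth disparity from Claim~\ref{claim:blockweight}, (ii) the finality guarantee from Theorem~\ref{thm:finality}, and (iii) the fairness and unpredictability of the underlying leader election (Theorem~\ref{thm:caucus}) to rule out the long-range style deviations left open by Theorem~\ref{thm:finality}. At a high level, I want to show that after enough time, any competing chain maintained by the coalition falls hopelessly behind the chain grown by the altruistic majority, so the fork-choice rule $\fcr$ settles on a common prefix, and finality prevents this prefix from ever being revised.

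First I would argue forward-growth. Because altruistic players (at least $1-\coalfrac > 2/3$ fraction) always bet on $\fcr(\treepub)$, and because by Claim~\ref{claim:blockweight} each altruistic block betting on the stronger chain contributes an expected score of $S+1.5$ versus at most $S+1$ on any weaker one, the expected score difference between the altruistic chain $\chain^{\mathrm{alt}}$ and any coalition-maintained chain $\chain^{\coalition}$ grows linearly in the number of leader slots elapsed. Formally, letting $X_j$ be the per-slot score contribution gap, $X_j$ is an i.i.d.\ random variable with strictly positive mean $\mu > 0$ (depending on $\coalfrac$), so $\sum_{j\le N} X_j \ge \mu N/2$ except with probability $e^{-\Omega(N)}$ by a Chernoff bound. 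Consequently, the probability that any coalition chain ever ties or exceeds the altruistic chain $k$ slots after agreement, starting from a deficit, decays exponentially in $k$ — this is analogous to the Nakamoto race analysis, only sharpened by the fact that here we only need to beat the adversary in \emph{score}, not in length, which favors altruistic players whose blocks reference more leaves.

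Next I would argue that this probabilistic domination implies convergence in the sense of the theorem. Fix $k_0$; pick $N = N(k_0, \secp)$ large enough that: (a) with all-but-negligible probability, at least $k_0$ altruistic blocks have been appended since the last finalized rank, and (b) the score gap between $\chain^{\mathrm{alt}}$ and every alternative chain exceeds the maximum plausible future coalition contribution over the remaining horizon $T$, again by Chernoff. Let $\tau_0$ be this time and $\chain^0_{k_0}$ the prefix of the altruistic chain just before the last $Y$ blocks (where $Y$ is a small safety margin absorbing $\Delta$-delayed messages). Any altruistic player at time $t > \tau_0$ applies $\fcr$ to their view; because $\fcr$ picks the leaf with highest score (with the unbiasable hash tiebreak, ruling out grinding by private unpredictability of \sysname), and because no competing leaf can reach that score except with negligible probability, every altruistic $\chain^{i,t}$ must contain $\chain^0_{k_0}$.

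The remaining — and hardest — step is to rule out the two ``free'' adversarial tricks that sidestep the score race: the proof-of-delay branch and the long-range / stake-bleeding attack. For the proof-of-delay, I would use the assumption $\delta > \Delta$ together with $\coalfrac$-liveness of \sysname: with overwhelming probability an altruistic leader is elected and reveals before $F(\globalrandao_\round)$ becomes usable, so the adversary cannot repeatedly re-draw the lottery in private. For the long-range attack, I invoke Theorem~\ref{thm:finality}: once a rank-$\rank$ block is finalized, the candidate set at rank $\rank$ is frozen across all honest views, and the betting rule in Section~\ref{sec:protocol-rule} requires every new bet to reference a candidate block in the past of the last second witness; hence a chain forged from resold old keys cannot be accepted by any altruistic player because it lacks the correct checkpoint ancestry. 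Combining these, any deviation that does not race the altruistic chain on score is simply rejected by $\verifyblock$, and any deviation that does race loses by the Chernoff argument above; so $\chain^0_{k_0}\subseteq \chain^{i,t}$ for all $t>\tau_0$ except with negligible probability, as required.
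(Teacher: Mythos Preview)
Your proposal follows essentially the same three-part skeleton as the paper's proof: use Claim~\ref{claim:blockweight} to argue that the stronger chain outpaces any competitor in score, invoke the proof-of-delay assumption $\delta>\Delta$ plus liveness of \sysname to handle aborted rounds, and appeal to Theorem~\ref{thm:finality} and the betting rule to shut down long-range attacks. The paper's own argument is more informal where yours is more quantitative: rather than a Chernoff bound on an abstract per-slot gap $X_j$, the paper explicitly computes that an altruistic leader extends the weaker chain with probability roughly $0.5$ but extends the stronger chain with probability $1-0.5/n$, and concludes heuristically that the weaker chain is eventually abandoned; it then argues separately that reverting after time $\tau_0$ requires the adversary to privately assemble a subDAG with $m$ extra references, whose probability it bounds via the consecutive-leadership calculation reused in Theorem~\ref{thm:quality}.

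One technical caution on your version: the $X_j$ are not i.i.d., since the score increment contributed by a block depends on the current shape of the DAG (how many leaves are visible, which chain the previous leaders extended, etc.). Your Chernoff step therefore needs to be recast as a martingale/Azuma argument with bounded differences, or else replaced by the paper's more direct per-step probability comparison. With that adjustment the argument goes through and is arguably cleaner than the paper's, which never makes the concentration step fully rigorous either.
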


\newcommand\forkresistance{
\begin{proof}
We start by showing that the length of the longest chain in the DAG, indeed grows.
This is relatively straightforward, and follows from the discussion around
liveness in Section~\ref{sec:protocol-rule}.  To summarize, even if a
Byzantine player is the only leader and chooses not to publish a block, after
some delay players will ``re-draw'' the lottery and an altruistic player will be 
chosen eventually.
We now calculate the worst-case growth rate of the main chain, which 
happens when the adversary simply aborts.  If an adversary controls $\ncoal$ players, then an
aborting player is elected leader with probability $1-(1-p)^{\ncoal}$. This means that after
each block, the chain will grow normally with probability $(1-p)^{\ncoal}$ and will
grow with a delay of $\delta$ (where $\delta$ is the delay in the 
proof-of-delay) with probability $1-(1-p)^{\ncoal}$.
Moreover since a block is propagated with a maximum delay of $\Delta$, we have
that the worst rate at which a block is created is $(1-p)^{\ncoal}\cdot \Delta +(1-(1-p)^{\ncoal})\cdot (\Delta+\delta)=\Delta + \delta\cdot (1-(1-p)^{\ncoal}) $.
However it could be the case that more than one chain grows in the DAG.
We now move on to prove the core of the protocol.
\\ \indent
Let $k_0$ be an integer.
Due to the previous argument about the growth of the chain
and the semi-synchrony assumption (all players receive a block
before $\Delta$ slots), there exists a time
$\tau_1$ such that every honest player have
in their DAG at least one chain $C_{k_0}$ of length $k_0$.
Let's assume that there exists two such chains $C_{k_0}$ and $C'_{k_0}$.
We show that, with high probability, after some time $\tau_0$ one will be ``dropped'' and
thus for the remaining one we will have that
for every $t>\tau_0$ $C^0_{k_0}\subseteq \chain^{i,t}$.
\\ \indent
Let's assume that players start creating blocks on both chains.
After some time less or equal than $\Delta$ players will be aware of the
other chain and thus can start referencing it (remember that $\Delta$
is smaller than $\delta$).
Then it has to be the case that the score of one chain will grow faster
than the other one as shown in Claim~\ref{claim:blockweight} (even if both chain keep
growing).
Now, we argue why it is unlikely that both chain keep growing indefinitely.
The weaker chain grows only if the leader on that chain is either Byzantine
or hasn't heard of the latest blocks on the other chain.
As explained in the proof of Claim~\ref{claim:blockweight} for every altruistic player,
for two chains of roughly the same length,
there's half a chance that they receive the weaker chain first due to the unbiasability
and uniform distribution of the random beacon.
Thus for an altruistic player there's half a chance that
they extend the weaker chain.
\\ \indent
On the other hand, the stronger chain grows even if the elected leader received
the weaker one first (as long as they are not eligible on it, which happens with 
probability $1-1/n$). 
More formally, in the case where altruistic players are leaders on both chain,
the stronger one will be extended with probability $(1-1/n)+1/n\cdot 0.5= 1- 0.5 \cdot 1/n$
and the probability that the weaker chain grows is $0.5$.
Thus it is more likely for an altruistic player to extend the stronger chain.
This explains why the strongest chain grows with higer probability.
\\ \indent
Thus with high probability, there exists a time $\tau_0$
such that altruistic players will stop extending the weaker chain.
\\ \indent
After this time $\tau_0$ it is very unlikely that players
will revert their main chain to another chain.
Indeed an adversary that tries to revert the main chain does not succeed
except with negligible probability.
Let's assume that at time $\tau_0$, the difference between the main chain
and the chain that the adversary is trying to extend is $m$.
To revert the chain, they 
need to create a competitive DAG with at least $m$ references within the fork faster than the main chain 
grows. 
As $m$ gets biggers, the adversary will need to create more blocks privately and this attack becomes
less lilely to succeed as the probability of creating $\ell$ blocks privately decreases with $\ell$ (we will compute this probability in the proof of the next Theorem~\ref{thm:quality}).
Here we assume that the proof-of-delay is secure, i.e. that even an adversary
with enough power will not be able to compute a proof-of-delay faster than expected.
\\ \indent
Finally, as explained in Section~\ref{sec:defns-pos}, we must consider
long-range attacks, where an adversary re-writes the history by bribing old 
participants.  Because we add a decentralized checkpointing, this attack will not succeed.
Let's assume that an adversary has bought old keys from previous participants
and re-wrote the history of the blockchain with those.
When they receive this new chain, altruistic players are already
aware of at least one second witness block (as the reconfiguration
period has to start after a second witness block as explained in Section~\ref{sec:protocol-rule}).
According to the betting rule in Section~\ref{sec:protocol-rule} once altruistic players know of a second witness block, 
they will not bet on a block that does not bet on an associated candidate block.
Thus
altruistic players will never bet on the new adversarial chain.
This is also true for rational players since
when they see this new chain, they would have to
start ignoring all the blocks they have created in order to bet on it
(due to the fourth check in $\verifyblock$),
thus losing most of their deposit.
Thus the chain created with old keys will not be accepted by current
participants.
\\ \indent
We have thus shown that after time $\tau_0$, altruistic
players have agreed on the main chain $C_{k_0}$ and that it's very unlikley they will
revert to another main chain.
This proves the result.
\end{proof}
}
\iffull\forkresistance\fi
Next, we show chain quality.  Intuitively, this results from the fact that a 
player is unlikely to be elected a winner for many consecutive blocks
and that the proof-of-delay is secure. 
\begin{theorem}[Chain quality]\label{thm:quality}
A coalition $\coalfrac<1/3$ of non-altruistic players cannot
contribute to a fraction of more than $\mu=\coalfrac+\alpha$ of the blocks in
the main chain. Given our choice of values for the different parameter we
have $\alpha = 0.03$.
\end{theorem}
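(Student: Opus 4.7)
The plan is to combine the fairness guarantee of \sysname (Theorem~\ref{thm:caucus}) with a race analysis that bounds how much additional influence a coalition can gain by deviating from the honest strategy. By fairness of the leader election, the probability that a given eligibility check belongs to a coalition member equals $\coalfrac$ (since one unit of stake accounts for one participant under the flat model). So in expectation, the coalition contributes at most a fraction $\coalfrac$ of the blocks that end up in the main chain; the excess $\alpha$ must come entirely from strategic deviations. I would therefore partition the coalition's strategy space as described in Section~\ref{sec:sim-settings} and show that each deviation only provides a vanishing advantage over honest behavior.

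Next I would argue that the only way a coalition can make more of its blocks appear in the main chain than its stake would suggest is to privately grow a chain and release it to override altruistic blocks, in a selfish-mining style attack. I would model this as a biased random walk: after each leadership slot, the adversary's private lead over the public chain shifts up by one with probability $\coalfrac$ and down by one with probability roughly $(1-\coalfrac)$, once the proof-of-delay security excludes the possibility of cheaply regenerating eligibility (so the adversary cannot grind past the rate $\coalfrac$). For $\coalfrac < 1/3$ this walk is strictly biased towards zero, so the probability of reaching a private lead of $\ell$ is at most $(\coalfrac/(1-\coalfrac))^\ell$, which for $\coalfrac$ close to $1/3$ is at most $1/2^\ell$. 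I would then use Claim~\ref{claim:blockweight} to argue that, because the stronger chain's score grows faster by a constant factor, any withheld chain needs strictly more blocks than the honest chain in the same interval to override it, inflating the required $\ell$.

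To obtain the concrete value $\alpha = 0.03$, I would sum the expected number of honest blocks displaced over all possible lead lengths: each successful override of depth $\ell$ replaces at most $\ell$ altruistic blocks with coalition ones, contributing at most $\ell \cdot (\coalfrac/(1-\coalfrac))^\ell$ to the excess fraction. Summing this geometric series for $\coalfrac < 1/3$ yields a bound on the per-block expected excess, and I would plug in the worst case $\coalfrac \to 1/3$ together with the parameter choices from Section~\ref{sec:incentives} (notably the value of $k$ in the labelling algorithm, which governs how many anticone blocks can remain $\neutral$) to obtain $\alpha \le 0.03$. Finally, I would invoke Theorem~\ref{thm:fork} (Convergence) to ensure that once the altruistic players settle on a main chain, the coalition cannot resurrect an old withheld fork, so the above bound applies to the main chain in the limit.

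The main obstacle will be tying the combinatorial race-analysis bound to the specific constant $\alpha = 0.03$: the random walk argument naturally yields a function of $\coalfrac$ and the connectivity parameter $k$, and one must either appeal to the simulation in Section~\ref{sec:simulations} to calibrate the constants or carry out a careful worst-case computation accounting for the extra score advantage of the altruistic chain from Claim~\ref{claim:blockweight}. I would rely on the simulation data to justify the numerical value while keeping the analytical bound as the qualitative guarantee that $\alpha$ is small and decreases rapidly in the gap between $\coalfrac$ and $1/3$.
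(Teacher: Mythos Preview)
Your approach diverges from the paper's in a way that misses the actual source of the excess $\alpha$. You frame the deviation as a selfish-mining style race: the coalition withholds a private chain, the lead evolves as a biased random walk, and the excess comes from occasionally overriding honest blocks. You explicitly dismiss grinding by saying the proof-of-delay ``excludes the possibility of cheaply regenerating eligibility (so the adversary cannot grind past the rate $\coalfrac$).'' But in the paper's argument, grinding is \emph{exactly} where $\alpha$ comes from, and your random-walk race contributes essentially nothing to the bound.

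Concretely, the paper does not analyze a race at all. It computes the expected number of \emph{consecutive} blocks the coalition can produce after any given block under two strategies. Honestly, the coalition is eligible with probability $1-(1-p)^{\ncoal}$ per step, giving an expected run length of about $0.395$ for $\coalfrac$ near $1/3$. Under grinding, whenever the coalition has $x$ eligible blocks at some step it can branch on \emph{each} of them, yielding $x\cdot \ncoal$ fresh eligibility trials at the next step; the paper writes this as a sum over sequences $(x_1,\dots,x_\ell)$ of Bernoulli successes where the number of trials at stage $i$ is $\ncoal x_{i-1}$. Evaluating this branching process gives an expected run length of about $0.42$, and the gap $0.42-0.395\approx 0.03$ is $\alpha$. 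There is no appeal to Claim~\ref{claim:blockweight}, to convergence, or to a walk argument; the constant falls out of a direct computation, not a simulation calibration.

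So the gap in your plan is twofold: (i) you attribute $\alpha$ to the wrong mechanism (withholding/overriding rather than eligibility branching), and (ii) your walk model, by fixing the per-step win probability at $\coalfrac$, bakes in the assumption that grinding is useless, which is false here---multiple simultaneous eligibilities let the coalition multiply its trials at the next step. A correct proof needs to model that branching explicitly.
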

\newcommand\quality{
\begin{proof}
The probability that an adversary controlling $\ncoal$ participants 
can contribute one block is $1-(1-p)^{\ncoal}$.
Thus we have that the probability that a player contributes exactly
$\ell$ consecutive blocks, without grinding is
$(1-(1-p)^{\ncoal})^\ell(1-p)^{\ncoal}$.
The expected number of consecutive blocks is thus
$\sum_{j=0}^\infty j(1-(1-p)^{\ncoal})^j(1-p)^{\ncoal}
=(1-p)^{\ncoal}\sum_{j=0}^\infty j(1-(1-p)^{\ncoal})^j
=(1-p)^{\ncoal}\cdot \frac{(1-(1-p)^{\ncoal})}{(1-(1-(1-p)^{\ncoal}))^2}
=(1-p)^{\ncoal}/(1-(1-p)^{\ncoal})$.
(This is the case where a player re-draw the lottery once each time they
are elected leaders.)
A quick estimation shows that for $n$ big enough
the expected number of consecutive blocks
for a coalition of a third that plays honestly is 0.395. 
For a coalition that grinds, the idea is that they will 
``re-draw'' the lottery for each of their
winning shares to try and create more blocks.
The probability of creating a chain of exactly $\ell$ consecutive blocks
when grinding is
$\sum_{x_1,\dots,x_\ell\in S_\ell}\binom{\ncoal}{x_1}\dots\binom{\ncoal x_{\ell-1}}{x_\ell}
p^{\sum x_i}(1-p)^{n(1+\sum_{i=1}^{\ell}x_i)-\sum_{i=1}^{\ell}x_i}$,
where $S_\ell=\{x_1,\dots,x_\ell : 1\le x_1\le \ncoal \ ; \forall i>1 \ 1\le x_i\le x_{i-1}\ncoal \}$.
(This is a sequence of Bernoulli trials where the number of trials is
the number of successes on the previous round times $\ncoal$.)
Using the above probability, a player
that grinds through all their blocks has an expectation of creating 0.42 blocks.
This gives a value of $\alpha=0.03$. We will confirm this value in the simulation in the
next section.
We thus see that the advantage gained by grinding is limited,
due to our leader election mechanism.
Again, we assume that the proof-of-delay is secure.
\end{proof}
}
\iffull\quality\fi
Finally, we show robustness.  The main reason this holds is that, by 
following the protocol in betting on the $\fcr$, a block gets more references 
and thus has a higher score than a block not following the protocol.
A coalition of players can gain a small advantage by grinding through
all the blocks they can create, but when doing so they keep their chain
private and thus prevent other players from referencing it.  This in 
turn reduce the rewards associated with these blocks.

\begin{theorem}[Robustness]\label{thm:robustness}
Given the utility function in Equation~\ref{eq:utility} and the values for 
$\rwd$ and $\pun$ chosen in Section~\ref{sec:sim-settings}, following the 
protocol is a $\epsilon$-$(1/3,1/4)$-robust equilibrium, where $\epsilon =
1.1$.
\end{theorem}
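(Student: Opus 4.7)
The plan is to establish both halves of an $\epsilon$-$(k,t)$-robust equilibrium separately: the $k$-resilience against a rational coalition $\coalition$ of size up to $\ratfrac = 1/3$, and the $t$-immunity against a Byzantine fraction up to $\byzfrac = 1/4$. For each, I would compare the utility obtained by following the protocol against the utility obtained under an arbitrary deviation, using the utility function in Equation~\ref{eq:utility} together with the chain-quality and convergence guarantees already established.

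For resilience, I would enumerate the deviation strategies identified in Section~\ref{sec:sim-settings}: (i) withholding blocks and grinding through private continuations, (ii) placing a bet on a block different from the one selected by $\fcr$, and (iii) failing to reference one's own prior blocks. Strategies (ii) and (iii) are the easy cases: by Theorem~\ref{thm:fork}, any bet off the $\fcr$-chosen chain is eventually labeled $\loser$ in the $\bcpc$ and therefore contributes $-\pun$ to the utility, while (iii) directly triggers the $\bigpun$ term, whose magnitude was selected in Section~\ref{sec:sim-settings} precisely to dominate any compensating gain. The interesting case is strategy (i). Here I would use Theorem~\ref{thm:quality} to cap the fraction of main-chain blocks that $\coalition$ owns at $\coalfrac + 0.03$, so the multiplicative gain in the \emph{count} of winning blocks is at most $1 + 0.03/\coalfrac$. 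Against this I would weigh the reduction in the per-block reward: because $\rwd(\block) = |\leaves|\cdot c$ and a block withheld from the rest of the network cannot be referenced by outside participants, a grinding coalition collects fewer references on each of its own blocks than it would under honest play. Combining the bounded extra count of winning blocks with the reduced reference count, I would argue that the overall ratio of utilities is at most $1.1$.

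For immunity, I would show that a Byzantine set of fraction up to $1/4$ cannot drive altruistic utility below $1/\epsilon$ times its honest baseline. Two routes are available to the Byzantine coalition to hurt honest players: reverting parts of the main chain (turning honest winning blocks into losers), or starving honest players of reward by biasing reference counts. Theorem~\ref{thm:fork} bounds the probability of the first event by a negligible quantity after time $\tau_0$, so at most a constant prefix of honest blocks is at risk of relabeling, which vanishes in the game's horizon $T$. For the second, Theorem~\ref{thm:quality} ensures altruistic players still contribute roughly their expected fraction, and each altruistic block that is published is still referenced by the rest of the altruistic network; Byzantine players can withhold their own references but cannot suppress those produced by other honest players. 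Aggregating, the honest utility drops by at most a small multiplicative factor, comfortably above $1/1.1$.

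The main obstacle will be pinning down the concrete value $\epsilon = 1.1$ rather than merely showing $\epsilon < \infty$. The analytic accounting above gives a qualitative inequality, but tightening it to $1.1$ requires combining (a) the $+0.03$ chain-quality slack at $\coalfrac = 1/3$, (b) the expected loss in $|\leaves|$ for privately grown chains, and (c) the exact settings of $\rwd$, $\pun$, $\bigpun$ from Section~\ref{sec:sim-settings}. Rather than deriving a closed-form bound, my plan is to prove the qualitative statement analytically using Theorems~\ref{thm:fork} and~\ref{thm:quality}, and then appeal to the simulation in Section~\ref{sec:simulations} to certify the numerical value $\epsilon = 1.1$ across the relevant parameter range for both the resilience and immunity ratios.
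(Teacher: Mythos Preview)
Your resilience argument is close in spirit to the paper's, though the paper does not actually quantify the trade-off between extra winning blocks and lost references; it simply takes the ratio of expected consecutive blocks with and without grinding computed in Theorem~\ref{thm:quality} ($0.42$ versus $0.39$) and reads off $\epsilon \approx 1.08$ on the resilience side. Your more elaborate accounting of reduced $|\leaves|$ is plausible but is not what drives the bound in the paper.

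The real gap is in your immunity argument. You identify two attack vectors for the Byzantine coalition: reverting the main chain, and withholding references so as to depress the $\rwd$ term for honest blocks. Neither of these is the mechanism that matters. The paper's analysis hinges on the $\pun$ term, not the $\rwd$ term: a Byzantine coalition hurts an altruistic player most by privately growing a subDAG of more than $k$ blocks (the inter-connectivity parameter in $\lab$), so that those blocks land in the anticone of a concurrently produced honest block and push its $\anticone \cap \blue$ count above $k$, causing it to be labeled $\loser$ and incur $-\pun$. The quantitative content is then the probability that a coalition of a given size can produce such a length-$\geq k$ private subDAG; the paper computes this explicitly (about $2.5\%$ for a third, about $1\%$ for a quarter) and combines it with the concrete values $\pun = 6$, $c = 1$ to get a drop from $67c$ to $67c - 2.5\cdot\pun$ (ratio $\approx 1.29$) at one third and to $67c - 1\cdot\pun$ (ratio $\approx 1.1$) at one quarter. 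This is precisely why immunity is claimed only at $t = 1/4$ and why $\epsilon = 1.1$: the constant is derived analytically from the immunity side, not extracted from the simulation and not from the resilience side as your plan suggests. Without engaging the anticone/$\lab$ mechanism, your immunity argument cannot recover either the $1/4$ threshold or the value $1.1$.
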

\newcommand
\begin{proof}
In order to get an intuition behind the proof, we
first show that following the protocol is a Nash equilibrium.
Recall that the possible choices when elected leader are: (1) whether to bet or not, 
(2) which leaves to include, and (3) when to broadcast their blocks.  
We show that for each of these, if other players follow the protocol then a 
player is incentivized to follow the protocol.
If the player is elected leader on the $\fcr$, we want to show that betting on it gives
them a higher probability of being a winner.  This is because, by definition of the 
$\fcr$ (Algorithm~\ref{alg:fcr}), betting on it means betting on the stronger
chain.  As argued in Claim~\ref{claim:blockweight}, this will add more to
its score, and thus again by the definition of the $\fcr$ it has a higher probability of
being the next block chosen by the $\fcr$. (It could still, however, lose against
another bet on the $\fcr$ that has a smaller hash, but even in this case
the $\lab$ function still labels it as $\neutral$.) This establishes (1).
\\ \indent
By creating a block on top of a weaker chain, a player needs to ignore
the stronger chain, which means referencing fewer blocks (i.e., 
ignoring blocks of which they are aware).  This means that their block will have 
worse connectivity, however, and thus has a higher chance of being
labelled $\loser$ and thus getting a punishment.  This is because, by the 
definition of the anticone, worse connectivity means a bigger anticone, which
in turns means a bigger intersection with the $\blue$ set and thus, by the 
definition of $\lab$ (Algorithm~\ref{alg:label}), a higher chance of being
labelled $\loser$.  This establishes (2). \saraha{could add some proba here
(expected utility is prob of winning * reward + prob of losing*punishment and punishment
is higher than reward.}
\\ \indent
To argue about (3), we now show why a rational player is incentivized to
reveal their block as soon as possible. 
By broadcasting their block as soon as they created it, 
their blocks can get more references (since other player follow the protocol),
which again increases the probability of being a winner, and the expected
accompanying reward. (A single player has nothing to grind through.)
\\ \indent
Now, in order to show that the protocol is robust, we show that a 
coalition of rational players that deviate from the protocol to raise 
their utility, can only do so by $\epsilon$ (resiliency).
We next show that a Byzantine adversary cannot decreases the utility
of honest players by more than $1/ \epsilon$ (immunity).
\\ \indent
\paragraph*{Resiliency}
We consider a coalition of a fraction $\coalfrac$ of 
rational players.
Because it is costless to create blocks, a rational coalition
can clearly gain an advantage by grinding
through all the blocks they can create in order to find a subDAG
that increases their utility. However due to the restrictions imposed by
the leader election and assuming the proof-of-delay is secure
the advantage they can gain is limited.
As shown in Theorem~\ref{thm:quality}, the expected number of blocks
contributed by an adversary that adopts a grinding strategy
is 0.42
versus 0.39 for a coalition that follows the rule. Rational participants can thus increase their gain
from $0.39c$ to $0.42c$.
We thus achieve $\epsilon$-robustness with 
$\epsilon=0.42/0.39=1.077$. This results will be confirmed
by the simulations.

\paragraph*{Immunity}
We need to show that even in the case where a fraction $t$ of players behave 
completely irrationally, the outcome of the rest of the players stays unchanged.
According to the utility functions, as defined in Section~\ref{sec:protocol},
there are three independant components to the utility function that an adversary
could try to influence to harm altruistic players: (1) the $\rwd$ term
(2) the $\pun$ term and (3) the $\bigpun$ term. To harm the honest player,
an adversary could try:
(1) preventing an honest players from contributing blocks to the main chain;
(2)-(3) increasing the number of blocks from the honest
players that gets punished by $\pun$ or $\bigpun$.
The adversary cannot incur any $\bigpun$ to the altruistic players since
they cannot force them to ignore their own block, thus we only focus on case (1)
and (2).
To do (1) an adversary cannot indeed prevent players from creating blocks but once
they produce it, they can try and create an alternative blockDAG
so that the altruistic player's block does not make it to the main chain.
To do (2), the adversary could create an alternative blockDAG
that does not reference altruistic players' blocks
to 
try and incur a punishment to their blocks.
(According to the $\lab$ function defined in Section~\ref{sec:protocol}
a block gets a punishment if its anticone intersects the blue set
for more than $k$ blocks and a block that has less connections to other blocks
has a bigger anticone.)
In both cases the Byzantine adversary harms a player the most when creating
the biggest alternative blockDAG that does not reference altruistic players'
blocks.
\begin{figure*}[th]
\centering
\begin{subfigure}[b]{0.48\linewidth}
\centering
\includegraphics[width=0.8\linewidth]{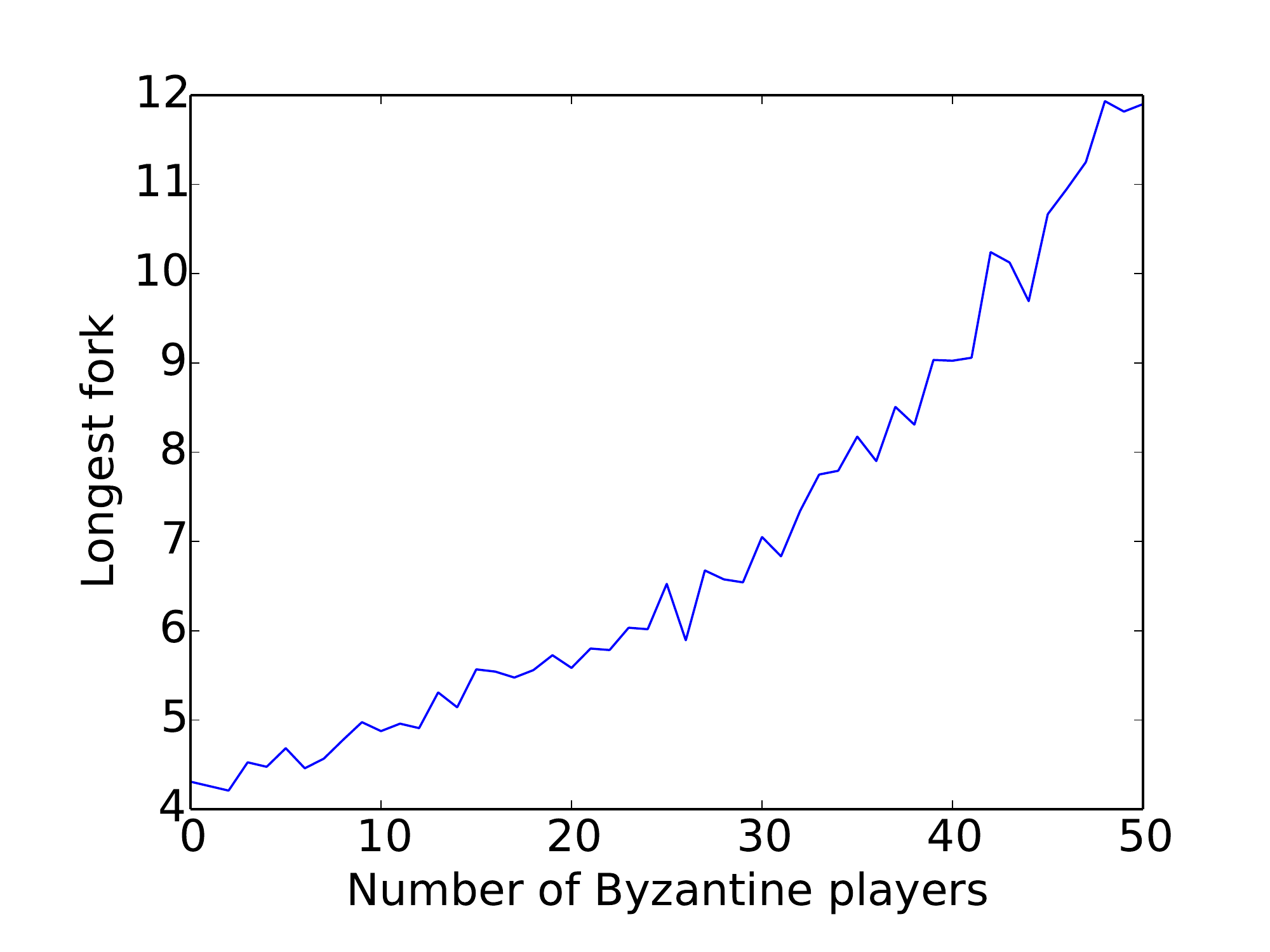}
\caption{The length of the longest fork, in the presence of a coalition of 
Byzantine players.}
\label{fig:fork}
\end{subfigure}
~\begin{subfigure}[b]{0.48\linewidth}
\centering
\includegraphics[width=0.8\linewidth]{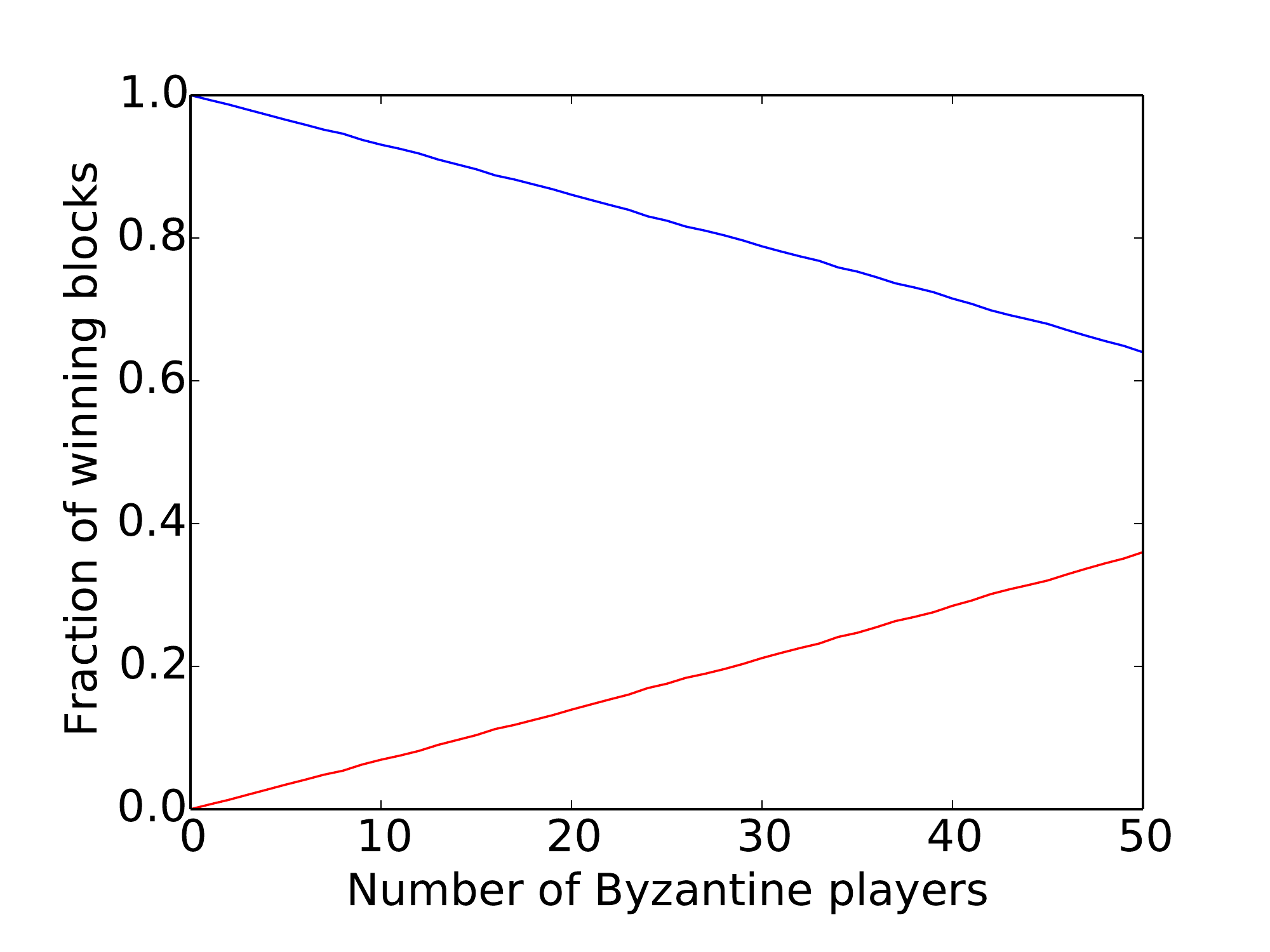}
\caption{The fraction of winning blocks belonging to altruistic (blue) and 
Byzantine (red) players.}
\label{fig:quality}
\end{subfigure}
~\\
\begin{subfigure}[b]{0.48\linewidth}
\centering
\includegraphics[width=0.8\linewidth]{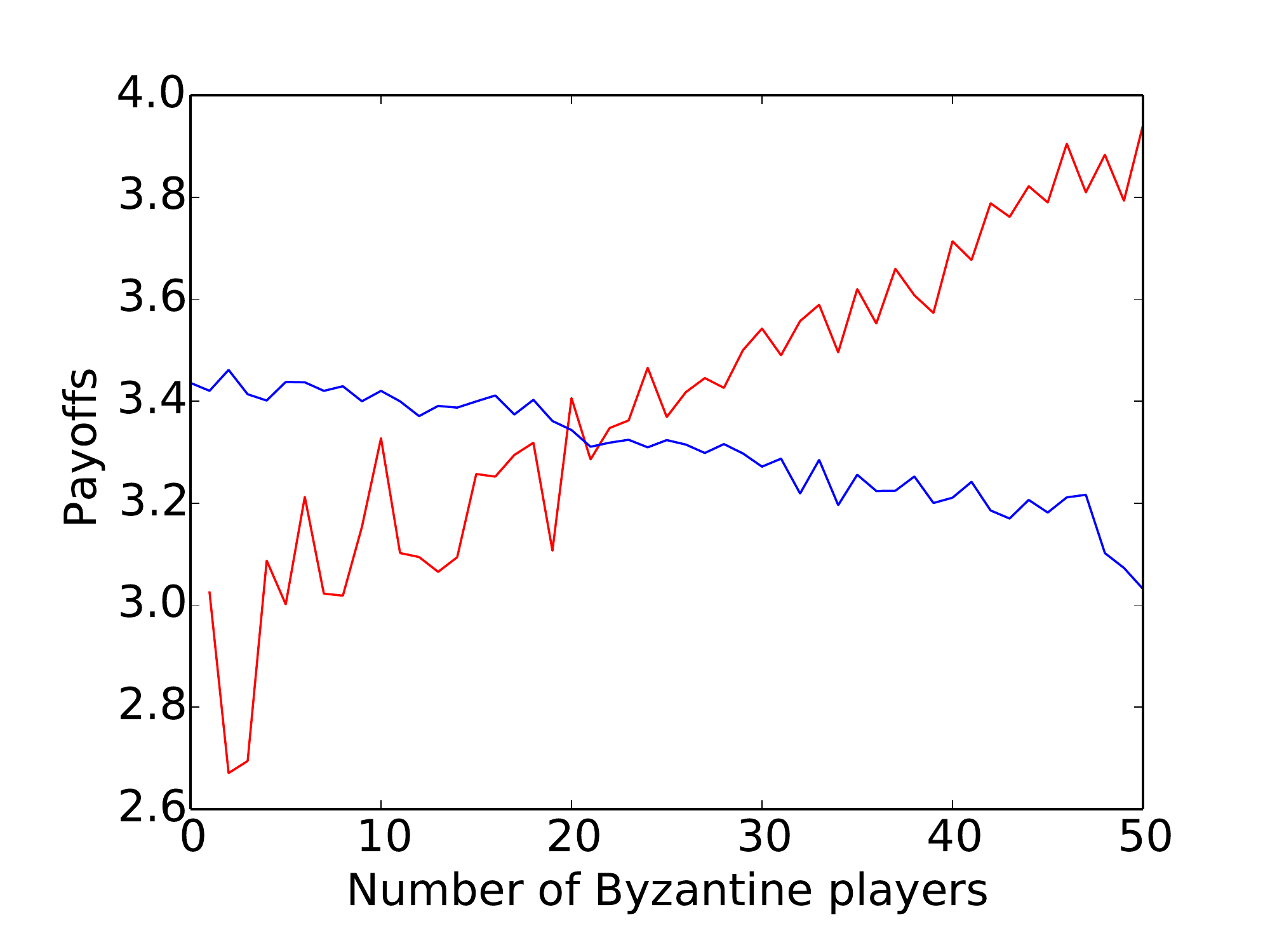}
\caption{The payoff for altruistic players (blue) and a coalition of
rational players (red).}
\label{fig:rat-payoff}
\end{subfigure}
~
\begin{subfigure}[b]{0.48\linewidth}
\centering
\includegraphics[width=0.8\linewidth]{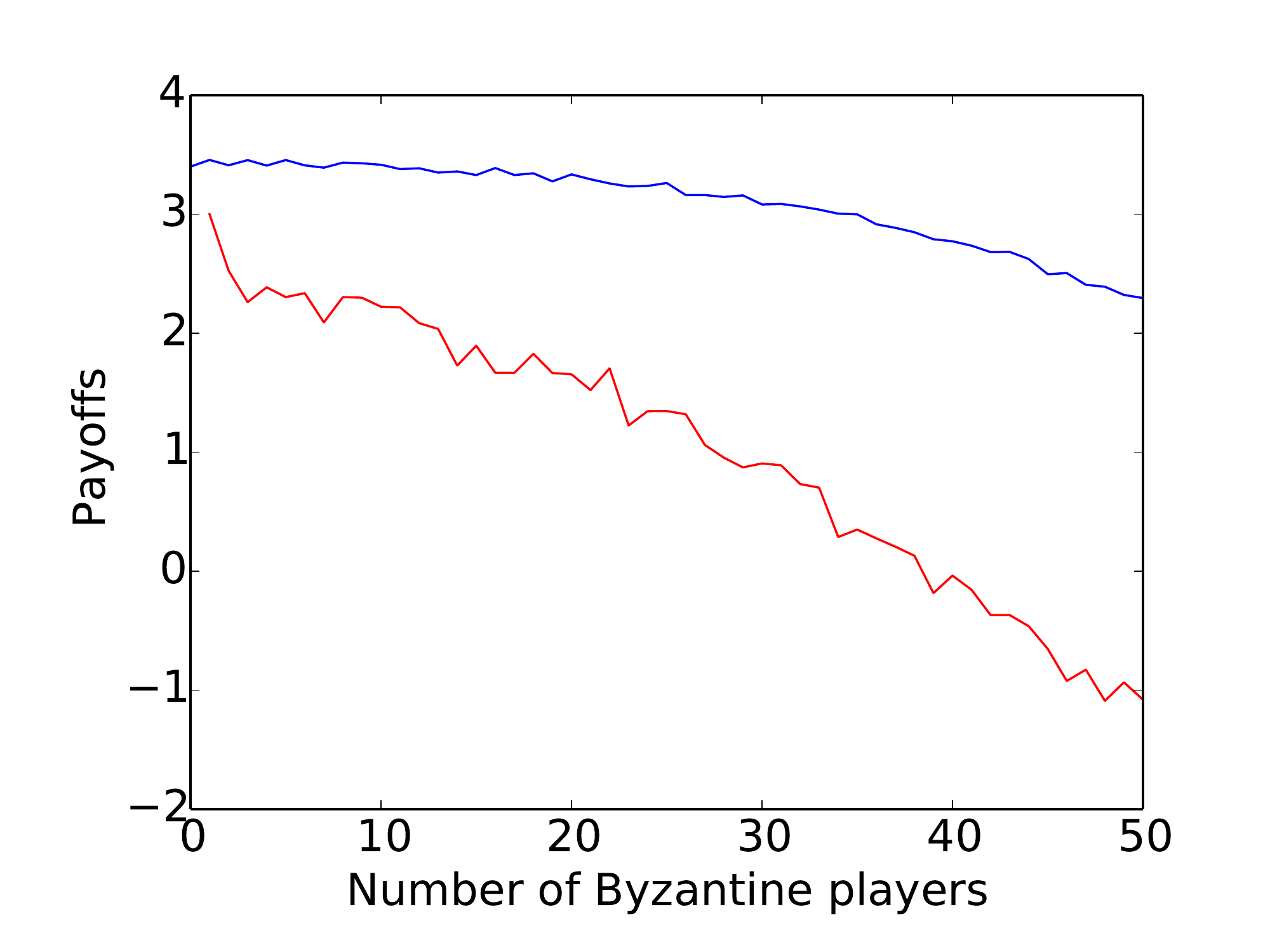}
\caption{The payoff for altruistic players (blue) in the presence of a
coalition of Byzantine players (red).}
\label{fig:immunity1}
\end{subfigure}
\caption{Results from our simulation, averaged over 10 runs and considering
up to 50 non-altruistic players (out of a total of 150).}
\label{fig:sims}
\end{figure*}

To incur a punishment to the altruistic players
Byzantine players need to create a subDAG of more than $k$ blocks on their own,
where $k$ is the interconnectivity paramater.
Indeed if they do so then those $k$ blocks will be
in the anticone of an altruistic player that contributed a block
at that same time and will be labeled a loser according to
the $\lab$ algorithm in Section~\ref{sec:protocol-rule}.
When choosing $k=3$ and using similar probabilities as in Theorem~\ref{thm:quality},
one can compute that
the probability of creating a subDAG of more than 3 blocks
is
$(1-p)^{\ncoal}+ 
{\ncoal}*p*(1-p)^{2\ncoal-1}+
\binom{n}{2}p^2(1-p)^{3\ncoal-2}
+(np(1-p)^{\ncoal-1})^2 (1-p)^{\ncoal}+
\binom{\ncoal}{3}p^3(1-p)^{4\ncoal-3}+
\binom{\ncoal}{2}\ncoal p^3 (1-p)^{3\ncoal-2}+
\binom{\ncoal}{2}2\ncoal p^3 (1-p)^{4\ncoal-3}+
({\ncoal}p(1-p)^{(\ncoal-1)})^{3}(1-p)^{\ncoal}
$.
An estimation of the previous probability gives
2.5\% for an adversary that controls a third of the player.
\\ \indent
This means that 2.5\% of the blocks the adversary create can incur a punishment of $\pun$ to another player. Since on
average two third of the players should contribute to two third
of the blocks, we have that a third of Byzantine players will reduce the
payoff of the other players from $67\cdot c$ to $67\cdot c-2.5\cdot\pun$ every hundred blocks.
With our numerical value of $\pun=6$ and $c=1$ we have that the payoff of the other players is reduced
by 67/52=1.29.
For a coalition of a quarter the above probability is 1\%, bringing the above ratio
to 67/61=1.1. With $\epsilon=1.1$,
we conclude that a coalition of one quarter of Byzantine
player cannot harm the others since they cannot create an alternative blockDAG with
enough advantage. 
The protocol is thus $\epsilon-$immune against a coalition of a quarter. 
This will also be confirmed by our simulations Section~\ref{sec:simulations}.
\end{proof}
{
\begin{proof}
In order to get an intuition behind the proof, we
first show that following the protocol is a Nash equilibrium.
Recall that the possible choices when elected leader are: (1) whether to bet or not, 
(2) which leaves to include, and (3) when to broadcast their blocks.  
We show that for each of these, if other players follow the protocol then a 
player is incentivized to follow the protocol.
If the player is elected leader on the $\fcr$, we want to show that betting on it gives
them a higher probability of being a winner.  This is because, by definition of the 
$\fcr$ (Algorithm~\ref{alg:fcr}), betting on it means betting on the stronger
chain.  As argued in Claim~\ref{claim:blockweight}, this will add more to
its score, and thus again by the definition of the $\fcr$ it has a higher probability of
being the next block chosen by the $\fcr$. (It could still, however, lose against
another bet on the $\fcr$ that has a smaller hash, but even in this case
the $\lab$ function still labels it as $\neutral$.) This establishes (1).
\\ \indent
By creating a block on top of a weaker chain, a player needs to ignore
the stronger chain, which means referencing fewer blocks (i.e., 
ignoring blocks of which they are aware).  This means that their block will have 
worse connectivity, however, and thus has a higher chance of being
labelled $\loser$ and thus getting a punishment.  This is because, by the 
definition of the anticone, worse connectivity means a bigger anticone, which
in turns means a bigger intersection with the $\blue$ set and thus, by the 
definition of $\lab$ (Algorithm~\ref{alg:label}), a higher chance of being
labelled $\loser$.  This establishes (2). \saraha{could add some proba here
(expected utility is prob of winning * reward + prob of losing*punishment and punishment
is higher than reward.}
\\ \indent
To argue about (3), we now show why a rational player is incentivized to
reveal their block as soon as possible. 
By broadcasting their block as soon as they created it, 
their blocks can get more references (since other player follow the protocol),
which again increases the probability of being a winner, and the expected
accompanying reward. (A single player has nothing to grind through.)
\\ \indent
Now, in order to show that the protocol is robust, we show that a 
coalition of rational players that deviate from the protocol to raise 
their utility, can only do so by $\epsilon$ (resiliency).
We next show that a Byzantine adversary cannot decreases the utility
of honest players by more than $1/ \epsilon$ (immunity).
\\ \indent
\paragraph*{Resiliency}
We consider a coalition of a fraction $\coalfrac$ of 
rational players.
Because it is costless to create blocks, a rational coalition
can clearly gain an advantage by grinding
through all the blocks they can create in order to find a subDAG
that increases their utility. However due to the restrictions imposed by
the leader election and assuming the proof-of-delay is secure
the advantage they can gain is limited.
As shown in Theorem~\ref{thm:quality}, the expected number of blocks
contributed by an adversary that adopts a grinding strategy
is 0.42
versus 0.39 for a coalition that follows the rule. Rational participants can thus increase their gain
from $0.39c$ to $0.42c$.
We thus achieve $\epsilon$-robustness with 
$\epsilon=0.42/0.39=1.077$. This results will be confirmed
by the simulations.

\paragraph*{Immunity}
We need to show that even in the case where a fraction $t$ of players behave 
completely irrationally, the outcome of the rest of the players stays unchanged.
According to the utility functions, as defined in Section~\ref{sec:protocol},
there are three independant components to the utility function that an adversary
could try to influence to harm altruistic players: (1) the $\rwd$ term
(2) the $\pun$ term and (3) the $\bigpun$ term. To harm the honest player,
an adversary could try:
(1) preventing an honest players from contributing blocks to the main chain;
(2)-(3) increasing the number of blocks from the honest
players that gets punished by $\pun$ or $\bigpun$.
The adversary cannot incur any $\bigpun$ to the altruistic players since
they cannot force them to ignore their own block, thus we only focus on case (1)
and (2).
To do (1) an adversary cannot indeed prevent players from creating blocks but once
they produce it, they can try and create an alternative blockDAG
so that the altruistic player's block does not make it to the main chain.
To do (2), the adversary could create an alternative blockDAG
that does not reference altruistic players' blocks
to 
try and incur a punishment to their blocks.
(According to the $\lab$ function defined in Section~\ref{sec:protocol}
a block gets a punishment if its anticone intersects the blue set
for more than $k$ blocks and a block that has less connections to other blocks
has a bigger anticone.)
In both cases the Byzantine adversary harms a player the most when creating
the biggest alternative blockDAG that does not reference altruistic players'
blocks.
\begin{figure*}[th]
\centering
\begin{subfigure}[b]{0.48\linewidth}
\centering
\includegraphics[width=0.8\linewidth]{fork.pdf}
\caption{The length of the longest fork, in the presence of a coalition of 
Byzantine players.}
\label{fig:fork}
\end{subfigure}
~\begin{subfigure}[b]{0.48\linewidth}
\centering
\includegraphics[width=0.8\linewidth]{fraction.pdf}
\caption{The fraction of winning blocks belonging to altruistic (blue) and 
Byzantine (red) players.}
\label{fig:quality}
\end{subfigure}
~\\
\begin{subfigure}[b]{0.48\linewidth}
\centering
\includegraphics[width=0.8\linewidth]{payoffs-rat.pdf}
\caption{The payoff for altruistic players (blue) and a coalition of
rational players (red).}
\label{fig:rat-payoff}
\end{subfigure}
~
\begin{subfigure}[b]{0.48\linewidth}
\centering
\includegraphics[width=0.8\linewidth]{payoffs-byz.pdf}
\caption{The payoff for altruistic players (blue) in the presence of a
coalition of Byzantine players (red).}
\label{fig:immunity1}
\end{subfigure}
\caption{Results from our simulation, averaged over 10 runs and considering
up to 50 non-altruistic players (out of a total of 150).}
\label{fig:sims}
\end{figure*}

To incur a punishment to the altruistic players
Byzantine players need to create a subDAG of more than $k$ blocks on their own,
where $k$ is the interconnectivity paramater.
Indeed if they do so then those $k$ blocks will be
in the anticone of an altruistic player that contributed a block
at that same time and will be labeled a loser according to
the $\lab$ algorithm in Section~\ref{sec:protocol-rule}.
When choosing $k=3$ and using similar probabilities as in Theorem~\ref{thm:quality},
one can compute that
the probability of creating a subDAG of more than 3 blocks
is
$(1-p)^{\ncoal}+ 
{\ncoal}*p*(1-p)^{2\ncoal-1}+
\binom{n}{2}p^2(1-p)^{3\ncoal-2}
+(np(1-p)^{\ncoal-1})^2 (1-p)^{\ncoal}+
\binom{\ncoal}{3}p^3(1-p)^{4\ncoal-3}+
\binom{\ncoal}{2}\ncoal p^3 (1-p)^{3\ncoal-2}+
\binom{\ncoal}{2}2\ncoal p^3 (1-p)^{4\ncoal-3}+
({\ncoal}p(1-p)^{(\ncoal-1)})^{3}(1-p)^{\ncoal}
$.
An estimation of the previous probability gives
2.5\% for an adversary that controls a third of the player.
\\ \indent
This means that 2.5\% of the blocks the adversary create can incur a punishment of $\pun$ to another player. Since on
average two third of the players should contribute to two third
of the blocks, we have that a third of Byzantine players will reduce the
payoff of the other players from $67\cdot c$ to $67\cdot c-2.5\cdot\pun$ every hundred blocks.
With our numerical value of $\pun=6$ and $c=1$ we have that the payoff of the other players is reduced
by 67/52=1.29.
For a coalition of a quarter the above probability is 1\%, bringing the above ratio
to 67/61=1.1. With $\epsilon=1.1$,
we conclude that a coalition of one quarter of Byzantine
player cannot harm the others since they cannot create an alternative blockDAG with
enough advantage. 
The protocol is thus $\epsilon-$immune against a coalition of a quarter. 
This will also be confirmed by our simulations Section~\ref{sec:simulations}.
\end{proof}
}
\iffull
\begin{proof}
In order to get an intuition behind the proof, we
first show that following the protocol is a Nash equilibrium.
Recall that the possible choices when elected leader are: (1) whether to bet or not, 
(2) which leaves to include, and (3) when to broadcast their blocks.  
We show that for each of these, if other players follow the protocol then a 
player is incentivized to follow the protocol.
If the player is elected leader on the $\fcr$, we want to show that betting on it gives
them a higher probability of being a winner.  This is because, by definition of the 
$\fcr$ (Algorithm~\ref{alg:fcr}), betting on it means betting on the stronger
chain.  As argued in Claim~\ref{claim:blockweight}, this will add more to
its score, and thus again by the definition of the $\fcr$ it has a higher probability of
being the next block chosen by the $\fcr$. (It could still, however, lose against
another bet on the $\fcr$ that has a smaller hash, but even in this case
the $\lab$ function still labels it as $\neutral$.) This establishes (1).
\\ \indent
By creating a block on top of a weaker chain, a player needs to ignore
the stronger chain, which means referencing fewer blocks (i.e., 
ignoring blocks of which they are aware).  This means that their block will have 
worse connectivity, however, and thus has a higher chance of being
labelled $\loser$ and thus getting a punishment.  This is because, by the 
definition of the anticone, worse connectivity means a bigger anticone, which
in turns means a bigger intersection with the $\blue$ set and thus, by the 
definition of $\lab$ (Algorithm~\ref{alg:label}), a higher chance of being
labelled $\loser$.  This establishes (2). \saraha{could add some proba here
(expected utility is prob of winning * reward + prob of losing*punishment and punishment
is higher than reward.}
\\ \indent
To argue about (3), we now show why a rational player is incentivized to
reveal their block as soon as possible. 
By broadcasting their block as soon as they created it, 
their blocks can get more references (since other player follow the protocol),
which again increases the probability of being a winner, and the expected
accompanying reward. (A single player has nothing to grind through.)
\\ \indent
Now, in order to show that the protocol is robust, we show that a 
coalition of rational players that deviate from the protocol to raise 
their utility, can only do so by $\epsilon$ (resiliency).
We next show that a Byzantine adversary cannot decreases the utility
of honest players by more than $1/ \epsilon$ (immunity).
\\ \indent
\paragraph*{Resiliency}
We consider a coalition of a fraction $\coalfrac$ of 
rational players.
Because it is costless to create blocks, a rational coalition
can clearly gain an advantage by grinding
through all the blocks they can create in order to find a subDAG
that increases their utility. However due to the restrictions imposed by
the leader election and assuming the proof-of-delay is secure
the advantage they can gain is limited.
As shown in Theorem~\ref{thm:quality}, the expected number of blocks
contributed by an adversary that adopts a grinding strategy
is 0.42
versus 0.39 for a coalition that follows the rule. Rational participants can thus increase their gain
from $0.39c$ to $0.42c$.
We thus achieve $\epsilon$-robustness with 
$\epsilon=0.42/0.39=1.077$. This results will be confirmed
by the simulations.

\paragraph*{Immunity}
We need to show that even in the case where a fraction $t$ of players behave 
completely irrationally, the outcome of the rest of the players stays unchanged.
According to the utility functions, as defined in Section~\ref{sec:protocol},
there are three independant components to the utility function that an adversary
could try to influence to harm altruistic players: (1) the $\rwd$ term
(2) the $\pun$ term and (3) the $\bigpun$ term. To harm the honest player,
an adversary could try:
(1) preventing an honest players from contributing blocks to the main chain;
(2)-(3) increasing the number of blocks from the honest
players that gets punished by $\pun$ or $\bigpun$.
The adversary cannot incur any $\bigpun$ to the altruistic players since
they cannot force them to ignore their own block, thus we only focus on case (1)
and (2).
To do (1) an adversary cannot indeed prevent players from creating blocks but once
they produce it, they can try and create an alternative blockDAG
so that the altruistic player's block does not make it to the main chain.
To do (2), the adversary could create an alternative blockDAG
that does not reference altruistic players' blocks
to 
try and incur a punishment to their blocks.
(According to the $\lab$ function defined in Section~\ref{sec:protocol}
a block gets a punishment if its anticone intersects the blue set
for more than $k$ blocks and a block that has less connections to other blocks
has a bigger anticone.)
In both cases the Byzantine adversary harms a player the most when creating
the biggest alternative blockDAG that does not reference altruistic players'
blocks.

To incur a punishment to the altruistic players
Byzantine players need to create a subDAG of more than $k$ blocks on their own,
where $k$ is the interconnectivity paramater.
Indeed if they do so then those $k$ blocks will be
in the anticone of an altruistic player that contributed a block
at that same time and will be labeled a loser according to
the $\lab$ algorithm in Section~\ref{sec:protocol-rule}.
When choosing $k=3$ and using similar probabilities as in Theorem~\ref{thm:quality},
one can compute that
the probability of creating a subDAG of more than 3 blocks
is
$(1-p)^{\ncoal}+ 
{\ncoal}*p*(1-p)^{2\ncoal-1}+
\binom{n}{2}p^2(1-p)^{3\ncoal-2}
+(np(1-p)^{\ncoal-1})^2 (1-p)^{\ncoal}+
\binom{\ncoal}{3}p^3(1-p)^{4\ncoal-3}+
\binom{\ncoal}{2}\ncoal p^3 (1-p)^{3\ncoal-2}+
\binom{\ncoal}{2}2\ncoal p^3 (1-p)^{4\ncoal-3}+
({\ncoal}p(1-p)^{(\ncoal-1)})^{3}(1-p)^{\ncoal}
$.
An estimation of the previous probability gives
2.5\% for an adversary that controls a third of the player.
\\ \indent
This means that 2.5\% of the blocks the adversary create can incur a punishment of $\pun$ to another player. Since on
average two third of the players should contribute to two third
of the blocks, we have that a third of Byzantine players will reduce the
payoff of the other players from $67\cdot c$ to $67\cdot c-2.5\cdot\pun$ every hundred blocks.
With our numerical value of $\pun=6$ and $c=1$ we have that the payoff of the other players is reduced
by 67/52=1.29.
For a coalition of a quarter the above probability is 1\%, bringing the above ratio
to 67/61=1.1. With $\epsilon=1.1$,
we conclude that a coalition of one quarter of Byzantine
player cannot harm the others since they cannot create an alternative blockDAG with
enough advantage. 
The protocol is thus $\epsilon-$immune against a coalition of a quarter. 
This will also be confirmed by our simulations Section~\ref{sec:simulations}.
\end{proof}
\fi

\subsection{Simulations}\label{sec:simulations}

We next present our simulations, that support and validate
the proofs above.
As stated in Section~\ref{sec:model}, we consider three types of players: 
Byzantine, altruistic, and rational.
We simulate the game using different fractions of different types of players.
Our simulation is written in Python, and consists of roughly 1,000 lines of 
code.  All players start with the same deposit.
To model network latency, we add random delays between the propagation of
blocks amongst players.  Following Decker and
Wattenhofer~\cite{information-propagation}, this random delay follow an
exponential distribution.
Each simulation that we run has 150 players and
lasts for 5,000 time slots.  All our results are
averaged over $120$ runs of the simulation.
\\ \indent
In addition to the balance of the types of players, there are several 
different parameters we need to consider: $k$, the inter-connectivity 
parameter; $c$, the constant in the reward in Equation~\ref{eq:utility}; 
$\mathsf{pun}$, the punishment; and $\bigpun$, the big punishment.  We 
chose $k = 3$, $c = 1$, $\pun = 6$, and $\bigpun = 10$.  We also 
define the initial deposit of all players to be $0$, but allow for payoffs to 
go below zero.  We stress 
that all these values are relatively arbitrary, as we are more interested in
the ratio between them and their evolution rather than their specific values.  
Different parameters do, however, result in different effects on the protocol.  
For example, decreasing the value of the punishment would increase the
immunity of the protocol, but would also weaken its resiliency (since a
coalition would be able to gain a bigger profit).  We leave a more in-depth
exploration of this trade-off for future work.
\\ \indent
Because we do not use our simulation to support our argument for the growth of the chain,
we do not implement the proof-of-delay function discussed at the end of
Section~\ref{sec:caucus-construction} (as it is crucial only for liveness).
\paragraph*{Stategies}
As explained in Section~\ref{sec:sim-settings}, the strategy
followed by rational players is to create all the blocks they can
privately and then grind through all the subDAG to find the one that increases their
expected utility and broadcast this subDAG only.
For the Byzantine players, we are interested in capturing the
\emph{worst} type of adversary in terms of (1) decreasing
the payoff of altruistic players (this will capture the immunity
property) and (2) the biggest fork they
can create (to capture the convergence property).
We explained in the proof of Theorem~\ref{thm:robustness} that the biggest harm Byzantine
players can incur to altruistic players is by creating the biggest subDAG
possible.
Thus we consider Byzantine adversaries that create as many blocks as possible
and broadcast them.
\paragraph*{Results}
We measured the length of the longest fork; the results
are in Figure~\ref{fig:fork}.  For a maximum number of 49 Byzantine players,
the longest fork is 12 blocks, so the simulation supports
our proof that the protocol converges.
\\ \indent
In 
Figure~\ref{fig:quality}
we see that the maximum fraction of blocks contributed by a fraction of up to
one-third of non-altruistic players is $0.351$ compared to an expected 
contribution of $49/150=0.327$.  This shows the chain quality property 
with $\alpha = 0.024$, meaning that grinding through all the 
blocks they are eligible to produce allows an adversary to contribute $2.4\%$
more blocks on average. This is consistent with our proof.
\\ \indent
For robustness, the results of the simulations are in Figures~\ref{fig:rat-payoff}
and~\ref{fig:immunity1}.
We see that a coalition of up to one-third of rational players
increases their payoff up to 3.8 compared to 3.5
when not forming a coalition. We thus achieve $\epsilon$-robustness with 
$\epsilon=1.086$.
To quantify the harm that a Byzantine coalition can do to others,
we compute for each simulation the payoff of altruistic players in the 
case of a fraction $t$ of Byzantine players trying to harm the players.
Figure~\ref{fig:immunity1} shows the payoffs for both altruistic and Byzantine 
players.  We see that the payoffs of altruistic players is unaffected even in
the presence of a quarter of Byzantine players, but starts decreasing after
the coalition becomes larger.  This is why we claim robustness only for 
coalitions up to a quarter, but do observe here that the payoff of the
Byzantine coalition is much more negatively affected than that of the 
altruistic players. These results are again consistent with our previous proof.
\section{Conclusions and Future Work}

This paper presented \betting, a blockchain-based consensus protocol composed 
of two components of potential independent interest: a leader election protocol,
\sysname, and a scheme for incentivization.
The protocol is secure in a semi-synchronous setting against a coalition
consisting of up to a third of participants, and achieves 
finality via the use of decentralized checkpointing.
\\ \indent
While \betting makes some important first steps in treating incentives
as a first-class concern, there are other avenues to consider.
%
In terms of evaluation, existing literature analyzing incentives in PoW-based 
systems has used techniques like Markov Decision Processes~\cite{Gervais:2016} 
or no-regret learning~\cite{bitcoin-without-block-reward} in order to justify 
more formally the best rational strategy.  These techniques would be much more
difficult to apply in a setting with PoS, but it would nevertheless be useful
to better justify the Byzantine strategy used in our simulations.
\ifndss
\\ \indent
Finally, we currently treat all bets as being of equal value (one
block is one bet), but it may be interesting to consider bets of variable 
size, in which players that are more confident about the blocks on which they
place bets (for example, because those blocks are highly connected) could
attempt to gain a higher reward by placing a bet of higher value.
\saraha{+ transistion from pow to pos}\fi

\section*{Acknowledgements}

All authors are supported in part by EPSRC Grant EP/N028104/1.

{
\balance
{\footnotesize
\def\shortbib{1}
\bibliographystyle{abbrv}
\begin{flushleft}
\bibliography{abbrev2,crypto,misc,ref,spw}
\end{flushleft}
}
}

\appendix

\ifsubmission\input{app-defns}\fi
\ifsubmission\input{app-caucus}\fi
\ifsubmission\input{app-details}\fi	

\end{document}